\documentclass[acmsmall,nonacm]{acmart}

\AtBeginDocument{%
  }

\usepackage{bm}
\usepackage{xspace}
\usepackage{algorithmicx}
\usepackage{algorithm}
\usepackage{algpseudocode}
\usepackage[colorinlistoftodos]{todonotes}

\usepackage{tikz}
\usetikzlibrary{calc}
\usetikzlibrary{shapes,decorations.markings,arrows.meta,fit}

\newtheorem{thm}{Theorem}[section]

\newtheorem{prop}[thm]{Proposition}

\theoremstyle{definition}              

\newtheorem{example}[thm]{Example}

\newtheorem{note}[thm]{Note}

\newcommand{\R}{\mathcal{R}}
\newcommand{\C}{\mathcal{C}}
\renewcommand{\C}{\mathcal{C}}

\newcommand{\reld}{\mathbf{D}}

\newcommand{\res}{\upharpoonright}
\renewcommand{\deg}{\mathrm{deg}}
\newcommand{\maxdeg}{\mathrm{maxdeg}}
\newcommand{\avgdeg}{\mathrm{avgdeg}}
\newcommand{\ans}{\mathrm{Answers}}
\newcommand{\polylog}{\mathrm{polylog}}
\newcommand{\Reals}{\mathbb{R}}

\newcommand{\subw}{\textsf{subw}}
\newcommand{\out}{\mathsf{OUT}}

\newcommand{\para}[1]{\medskip \noindent \textbf{#1.}}

\newcommand{\cleanup}{\mathrm{Cleanup}}
\newcommand{\guanaco}{\mathrm{Guanaco}}
\newcommand{\guanacorec}{\mathrm{Guanaco\textup{-}Recurse}}

\newcommand{\rev}[1]{#1}

\begin{document}

\title{Guanaco: A Global-Uniformity Algorithm for Near-Submodular-Width Conjunctive Query Evaluation}

\author{Mahmoud Abo Khamis}
\orcid{0000-0003-3894-6494}
\email{mahmoud.abokhamis@cs.ox.ac.uk}
\affiliation{
  \institution{University of Oxford}
  \city{Oxford}
  \country{UK}
}

\author{Hubie Chen}
\orcid{0009-0005-4025-8086}
\email{hubie.chen@kcl.ac.uk}
\affiliation{
  \institution{King's College London}
  \city{London}
  \country{UK}
}

\begin{abstract}
We present Guanaco, an algorithm for performing conjunctive query evaluation where, for each Boolean conjunctive query, and positive epsilon, the algorithm achieves polynomial time with exponent equal to the submodular width plus epsilon.  The algorithm and its running time generalize smoothly to general conjunctive queries.  We believe the algorithm and its analysis to be notably simple, indeed, together we believe they form a highly simple argument that conjunctive query evaluation can be performed in essentially submodular width time.  In the case of Boolean conjunctive queries, the algorithm is based on interleaving three simple primitives: a subroutine for establishing a form of consistency; a subroutine for establishing global uniformity, which, briefly speaking, partitions relations as needed to control discrepancies between average degree and maximum degree; and, a simple step that joins pairs of existing relations to form new relations.
\end{abstract}

\begin{CCSXML}
<ccs2012>
   <concept>
       <concept_id>10003752.10010070.10010111.10011711</concept_id>
       <concept_desc>Theory of computation~Database query processing and optimization (theory)</concept_desc>
       <concept_significance>500</concept_significance>
       </concept>
   <concept>
       <concept_id>10002951.10002952.10003190.10003192.10003426</concept_id>
       <concept_desc>Information systems~Join algorithms</concept_desc>
       <concept_significance>500</concept_significance>
       </concept>
 </ccs2012>
\end{CCSXML}

\ccsdesc[500]{Theory of computation~Database query processing and optimization (theory)}
\ccsdesc[500]{Information systems~Join algorithms}

\keywords{conjunctive queries; submodular width; uniformity}

\maketitle

\section{Introduction}
\label{sec:intro}

\para{Background}
\emph{Conjunctive queries} are frequently occurring and heavily studied database queries~\cite{AbiteboulHullVianu95-foundationsdatabases,SchweikardtSchwentickSegoufin09-querylanguages}, and the computational problem of evaluating a conjunctive query on a relational database---known as \emph{conjunctive query evaluation}---is
intensely studied in databases and in computer science at large~\cite{FlumGrohe06-parameterizedcomplexity,SchweikardtSchwentickSegoufin09-querylanguages}.
Indeed, this problem can be viewed as a formulation of the \emph{constraint satisfaction problem}~\cite{KolaitisVardi00-containment,chen2023computability}, and overall has been studied from a multitude of motivations and perspectives~\cite{AbiteboulHullVianu95-foundationsdatabases,GottlobLeoneScarcello00-comparison,GroheSchwentickSegoufin01-conjunctivequeries,Grohe07-otherside,SchweikardtSchwentickSegoufin09-querylanguages,Marx10-canyoubeat,Marx11-csptruthtables,DBLP:journals/jacm/Marx13,GroheMarx14-fractional-edge-covers,DBLP:conf/pods/GottlobGLS16,DBLP:conf/pods/KhamisNR16,ChenMueller17-hierarchy,GottlobLPR21-analysis-decompositions,chen2023computability}.

It is now well-known that for each conjunctive query, there exists a polynomial-time algorithm for query evaluation, that is, there exists a value $\alpha \geq 1$ and a query evaluation algorithm with running time $O(N^{\alpha})$, where $N$ denotes the input database size.  Given the centrality of this evaluation problem, a number of works have sought to obtain a fine-grained understanding of the minimum exponent~$\alpha$ achievable for each conjunctive query.  While there is not yet a comprehensive understanding thereof, a number of recent works have shown general upper bounds in terms of the
\emph{submodular width}, a hypergraph measure that was introduced by Marx~\cite{DBLP:journals/jacm/Marx13} and is based on the notion of submodularity.  We focus our discussion on Boolean conjunctive queries, although the algorithms we discuss can operate on general conjunctive queries.  We use $H$ to denote the hypergraph underlying conjunctive queries under discussion.
\begin{itemize}

\item Building on an algorithm and analysis in Marx's original article~\cite{DBLP:journals/jacm/Marx13} introducing submodular width, Berkholz and Schweikardt~\cite{berkholz_et_al:LIPIcs.MFCS.2019.58} presented an algorithm, which (following~\cite{jaguar}) we call the \emph{Marx-Berkholz-Schweikardt} algorithm.  This algorithm achieves, for each conjunctive query and $\epsilon > 0$, a runtime of
$O(N^{2 \cdot \subw(H)+\epsilon})$.  Main ingredients used by this algorithm include two notions of consistency and a general procedure that partitions relations to enforce \emph{uniformity}, which (briefly) means that the discrepancy between average degree and maximum degree, in relations, is controlled.

\item Abo Khamis, Ngo, and Suciu~\cite{DBLP:conf/pods/Khamis0S17,theoretics:13722} presented an information-theoretic algorithm called \emph{PANDA} that achieves a runtime of $O(N^{\subw(Q)} \cdot \polylog(N))$.
Subsequently, they introduced an improved version of the algorithm, called \emph{PANDAExpress}~\cite{panda-express} that mitigates the $\polylog(N)$ factor to leave only a single $\log (N)$,
that is, that runs in time $O(N^{\subw(Q)} \cdot \log(N))$.

\item Abo Khamis and Chen~\cite{jaguar} presented an algorithm called \emph{Jaguar} achieving, for each $\epsilon~>~0$, a runtime of
$O(N^{\subw(H)+\epsilon})$.  This algorithm, throughout an execution, maintains a set function defined on all variable subsets and reasons about submodularity to guide a process of computing joins and partitioning relations.
\end{itemize}
On a high level, these algorithms collectively make use of similar computational primitives, namely, consistency/semijoin reduction,
computing joins to introduce new relations, and
data partitioning.  This consensus suggests that a submodular-width runtime is a natural barrier for algorithms employing these primitives.  Related to this, Fan, Koutris, and Zhao~\cite{fan_et_al:LIPIcs.ICALP.2023.127} presented techniques for lower bounding the runtime of conjunctive query evaluation, and showed that, for a number of different conjunctive queries, no combinatorial algorithm can achieve runtime
$O(N^{\subw(H)-\epsilon})$, assuming established fine-grained complexity-theoretic hypotheses.

\para{Contributions}
In this article, we build on the notions, algorithms, and analyses of Marx~\cite{DBLP:journals/jacm/Marx13} and Berkholz and Schweikardt~\cite{berkholz_et_al:LIPIcs.MFCS.2019.58}
to give a new algorithm for conjunctive query evaluation
that achieves, for each $\epsilon > 0$, a runtime
of $O(N^{\subw(H)+\epsilon})$.
We believe our algorithm, called \emph{Guanaco}, and
its analysis to be remarkably simple, and to
show a novel way in which the basic ingredients used by
the above algorithms can be plainly combined to achieve
conjunctive query evaluation in essentially submodular-width time.
Indeed, we believe that the algorithm and analysis constitute one of the simplest explanations for showing the claimed time bound on conjunctive query evaluation, and could be used for teaching this query evaluation result.

In the Boolean case, the algorithm is based on
interleaving three basic and generic primitives:
a subroutine for establishing a form of consistency;
 a subroutine for establishing \emph{global uniformity},
which reduces skew between average degree and maximum degree;
and a simple step that joins pairs of existing relations to form new relations.  In the general case, a version of the well-known Yannakakis algorithm for finding answers to acyclic conjunctive queries is also employed.
\rev{In contrast to the PANDA family of algorithms and the Jaguar algorithm,
the execution of Guanaco is not guided explicitly by submodularity or polymatroids.  In terms of its use of submodularity,
Guanaco only requires knowing the fixed value of the submodular width, but does not need to know how this value was computed.}

Recent work introduced a hypergraph measure called $\omega$-submodular width~\cite{omega-submodular} which, by leveraging algorithms for fast matrix multiplication, can for some hypergraphs achieve runtimes strictly below the submodular width.  We believe that Guanaco, with its high simplicity, may be a fruitful footing for further investigating how to combine algebraic methods with the computational primitives used for submodular-width-based algorithms for conjunctive query evaluation.

\section{Preliminaries}
\label{sec:prelims}

When $S$ is a set, we use $\wp(S)$ to denote the power set of $S$.
By a hypergraph, we mean a pair $(V,E)$ consisting
of a set $V$ of vertices and a set $E \subseteq \wp(\rev{V})$ of
edges; when $H$ is a hypergraph, we sometimes denote
$H = (V(H),E(H))$.
Throughout, we use $f \res S$ to denote the restriction of a function $f$ to a set $S$.
We extend this notation to sets of mappings; for a set of mappings $F$,
we use $F \res S$ to denote the set $\{ f \res S ~|~ f \in F \}$.

\subsection{Tree decompositions}
\label{subsec:tree-decompositions}

Let $H$ be a hypergraph.  A \emph{tree decomposition}
of $H$ is a pair $(T,\chi)$ where $T$ is a tree and
$\chi: V(T) \to \wp(V(H))$ is a map satisfying
two properties:
(1) (coverage) for each
$e \in E(H)$, there exists a vertex $t \in V(T)$
with $e \subseteq \chi(t)$, and (2) for each
$v \in V(H)$, the set $\{ t ~|~ v \in \chi(t) \}$
forms a connected subtree of~$T$.
Here, when $t \in V(T)$ is a vertex, $\chi(t)$ is called
the \emph{bag} of $t$.
When $X \subseteq V(H)$, a
\emph{free-connex tree decomposition} of the pair $H, X$
is a tree decomposition of $H$
having the extra property that there is a connected subtree $T'$
of $T$ whose bags contain exactly the vertices in $X$, that is,
where $\bigcup_{t \in V(T')} \chi(t) = X$.
Here, the set $X$ is intended to represent the free variables of a conjunctive query.

\subsection{Submodular width}
\label{subsec:submodular-width}

Relative to a set $V$, a \emph{polymatroid}
is a set function $g: \wp(V) \to \mathbb{R}$
satisfying $g(\emptyset) = 0$;
$X \subseteq Y \subseteq V$ implies $g(X) \leq g(Y)$
(monotonicity);
and
$g(X) + g(Y) \geq g(X \cap Y) + g(X \cup Y)$
for all subsets $X, Y \subseteq V$ (submodularity).
When $H$ is a hypergraph,
let $S(H)$ denote the set containing each polymatroid
$g: \wp(V(H)) \to \mathbb{R}$
satisfying \emph{edge domination}: $g(e) \leq 1$ for each
$e \in E(H)$.
For a hypergraph $H$ and $X \subseteq V(H)$,
the \emph{submodular width} of the pair $H, X$,
denoted by $\subw(H,X)$,
is defined as $\sup_{g \in S(H)} \min_{(T,\chi)} \max_{t \in V(T)} g(\chi(t))$, where here the minimum is over all free-connex tree decompositions $(T,\chi)$ of $H, X$.
In the case that $X = \emptyset$, this becomes a minimum
over all tree decompositions $(T,\chi)$ of $H$, and one obtains Marx's original definition of submodular width~\cite{DBLP:journals/jacm/Marx13}.

\rev{Given a hypergraph $H$ and a set $X \subseteq V(H)$, in order to compute the submodular width, the first step is to distribute the min over the max in the above definition~\cite{DBLP:conf/pods/Khamis0S17,10.1145/3788852.3805595}.
This distribution gives birth to the concept of {\em ``bag selector''}, where a bag selector $B \subseteq \wp(V(H))$ is a set consisting of one bag from each tree decomposition. Let $\mathcal B$ denote the set of all bag selectors.
Then, the submodular width becomes:
\begin{align*}
    \subw(H, X) &= \sup_{g \in S(H)} \quad\min_{(T,\chi)}\quad \max_{t \in V(T)}\quad g(\chi(t))\\
    & = \sup_{g \in S(H)} \quad\max_{B \in \mathcal B}\quad \min_{b \in B}\quad g(b)\\
    & = \max_{B \in \mathcal B} \quad\sup_{g \in S(H)}\quad \min_{b \in B} \quad g(b)\\
    &= \max_{B \in \mathcal B} \quad \underbrace{\max \{z \in \mathbb{R} \mid \exists g \in S(H), z \leq g(b), \forall b \in B\}}_{\text{A linear program}}
\end{align*}
In particular, computing the submodular width reduces to solving a collection of linear programs, one for each bag selector $B$, and taking their maximum.
}

\subsection{Configurations}
\label{subsec:configurations}

To accommodate multiple audiences and multiple backgrounds, we use a formulation of conjunctive query evaluation that, although nonstandard, is intended to be accessible.
Each set of \emph{variables} under discussion will be a finite nonempty set whose subsets are typically denoted with letters such as $S$, $T$, $U$, $\ldots$; each \emph{domain} under discussion will be a finite nonempty set.
A \emph{configuration on a set $V$ of variables and a domain $D$} is a pair $(\R, (F_S)_{S \in \R})$
where
\begin{itemize}
\item
$\R$ is a nonempty subset of $\wp(V)$, that is, a set whose elements are
subsets of $V$,

\item
$(F_S)_{S \in \R}$ is an indexed collection of sets of mappings,
where for each $S \in \R$, it holds that $F_S$ is a set of
mappings from $S$ to $D$.  Such a set $F_S$ is sometimes called a \emph{relation}.

\end{itemize}
We say that a map $f: V \to D$ is an \emph{answer} of a configuration
 $(\R, (F_S)_{S \in \R})$
if for each $S \in \R$,
it holds that $(f \res S) \in F_S$.
We use $\ans(C)$ to denote the set of answers of $C$.
The \emph{hypergraph} of a configuration
$(\R, (F_S)_{S \in \R})$ on a set $V$ of variables and a domain $D$ is the hypergraph $(V,\R)$.

We will discuss two computational problems.  The first one, a formulation of Boolean conjunctive query evaluation, is the problem of deciding, given a configuration, whether or not there exists an answer of the configuration.  The second one, which generalizes the first one and is a formulation of conjunctive query evaluation, is the problem of computing, given a configuration $C$ and a set $X \subseteq V$ of free variables, the set $\ans(C) \upharpoonright X$.
We view the first problem as the case of the second problem where $X = \emptyset$: a configuration $C$ has an answer if and only if $\ans(C) \upharpoonright \rev{\emptyset}$ contains the empty answer, that is, the unique answer whose domain is the empty set $\emptyset$.

%

\begin{example}
    \label{ex:config}
\rev{
Consider, as an example, the conjunctive query
\[Q(U,W,Y,Z) = E(U,W), E(U,Z), G(W,Y,Z).\]
Here, all variables are free, as indicated by the notation
$Q(U,W,Y,Z)$, and so we have $V = X = \{ U, W, Y, Z \}$.
A database $\reld$ for this query would provide a universe $D$ and two relations, namely,~$E^\reld$, which would be a subset of $D^2$, that is, a set of pairs with entries from $D$; and $G^\reld$, which would be a subset of $D^3$, that is, a set of triples with entries from $D$.
In a typical formulation of conjunctive query evaluation on $Q$,
an instance would be a database $\reld$ for this query,
and the goal would be to find the set of query answers, where
$f: V \to D$ is an \emph{query answer} when
\[(f(U),f(W)) \in E^\reld, (f(U),f(Z)) \in E^{\reld}, (f(W),f(Y),f(Z)) \in G^\reld.\]

To formulate such an instance using the notion of configuration,
we would take $\R$ to be $ \{ \{ U,W\}, \{ U, Z \}, \{ W, Y, Z \} \}$,
and define, for example, $F_{ \{U,W \}}$ as the set of maps
$g: \{ U,W \} \to D$ such that $(g(U),g(W)) \in E^{\reld}$; the sets
$F_{ \{ U, Z \}}$ and $F_{\{ W, Y, Z \}}$ would be defined analogously.
Then, an answer of the configuration, as defined above, would be
exactly a query answer as just defined.

As a second example, consider the Boolean conjunctive query
\[Q'() = E(U,W), E(U,Z), G(W,Y,Z).\]
Here, no variable is free, as indicated by the notation $Q'()$,
and so we take  $X = \emptyset$ and $V = \{ U, W, Y, Z \}$.
Given a database $\reld$, we formulate a configuration $C$ from $\reld$
just as above.
Evaluating~$Q'$ on a database $\reld$ now amounts to
deciding if the configuration has any answer at all, that is,
whether or not $\ans(C)$ is nonempty;
we have that $\ans(C)$ is nonempty if and only if $\ans(C) \res \emptyset$ contains the empty answer.
}
\end{example}

We will use the following terminology to discuss configurations.
\rev{A number of the below notions, including the notions of \emph{maximum degree, average degree, and uniformity}, appear in and were used in Marx's article~\cite{DBLP:journals/jacm/Marx13}.}
In what follows, we assume that $M$, $\epsilon$ denote values
where $M \geq 1$ and $\epsilon > 0$.

\begin{itemize}
\item Relative to a configuration $(\R, (F_S)_{S \in \R})$, we refer to the elements of $\R$ as \emph{realized sets}.

\item A configuration $(\R, (F_S)_{S \in \R})$ is \emph{empty} when $\emptyset \in \R$ and $F_{\emptyset} = \emptyset$, and is \emph{non-empty} when $\emptyset \in \R$ and $F_{\emptyset} \neq \emptyset$.  Observe that an empty configuration has no answers.

\item
We say that a configuration $(\R, (F_S)_{S \in \R})$ is \emph{down-closed} if $\R$ is closed
under taking subsets.

\item
A configuration $(\R, (F_S)_{S \in \R})$ is \emph{$\pi$-consistent}
if it is down-closed, non-empty, and
for all $S, T \in \R$ such that $S \subseteq T$, it holds that
$F_T \res S = F_S$.
\item
Let $C = (\R, (F_S)_{S \in \R})$ be a $\pi$-consistent configuration,
and let $\rev{S\subseteq T \in \R}$.
We define the following notions of \emph{maximum} and \emph{average degree}:
$$\maxdeg_C(T|S) := \max_{f \in F_S} |\{ g \in F_T ~|~ \textup{$g$ extends $f$} \} |, \quad
\avgdeg_C(T|S) := |F_T| / |F_S|.$$
Here, we write that $g$ \emph{extends} $f$ when $g \res S = f$.

\item
A down-closed configuration $C = (\R, (F_S)_{S \in \R})$ is \emph{$(M, \epsilon)$-uniform}
if for all $S, T \in \R$ such that $S \subseteq T$, it holds that
$\maxdeg_C(T|S) \leq M^{\epsilon} \avgdeg_C(T|S)$.

\item
A down-closed configuration is
\emph{$(M, \epsilon)$-clean} if it is $\pi$-consistent
and $(M,\epsilon)$-uniform.

\end{itemize}

Note that in general, we will only discuss uniformity of configurations
that are $\pi$-consistent.

%

%

We say that a set $\C'$ of configurations \emph{represents} a configuration
$C$ if $\ans(C)$ is equal to $\bigcup_{C' \in \C'} \ans(C')$.
Say that a configuration
$C' = (\R', (F'_S)_{S \in \R'})$
\emph{tightens} or
\emph{is a tightening of} a configuration
$C = (\R, (F_S)_{S \in \R})$
if for each $S \in \R$, it holds that $S \in \R'$ and
$F'_S \subseteq F_S$.
That is, if~$C$ holds a relation on $S$, then $C'$ also holds a relation on $S$ which must be \rev{contained in that of $C$}.
Observe that when $C'$ tightens $C$, it holds that
$\ans(C') \subseteq \ans(C)$.

We will use the known fact concerning $\pi$-consistency.
\rev{Here and throughout, we assume a RAM model, and we study a data complexity, so (for example) the variable set under discussion is viewed as having constant size.}

\begin{prop}
\label{prop:establish-pi-consistency}
Fix a variable set $V$.
There exists a linear-time algorithm that, given a configuration $C$ on variable set $V$, \emph{establishes $\pi$-consistency} in that it removes functions from $C$'s relations so that the resulting configuration $C'$ is $\pi$-consistent or empty, is a tightening of $C$, and also
has $\ans(C') = \ans(C)$.
\end{prop}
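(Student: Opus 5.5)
The plan has two phases: first make the configuration down-closed, then run semijoin-style filtering to a fixpoint.

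\emph{Down-closure.} For every subset $S \subseteq V$ that is not realized but is contained in some realized $T$, add $S$ to $\R$. Set $F_S := F_T \res S$, intersected over all realized supersets $T$ of $S$ from the original $\R$. Since $|V|$ is a constant, there are at most $2^{|V|}$ realized sets. Each projection $F_T \res S$ can be computed in time linear in $|F_T|$, using sorting by radix sort or hashing in the RAM model. Each intersection is likewise linear. So this phase takes $O(N)$ time.

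Every answer $f$ of $C$ satisfies $f \res S = (f\res T)\res S \in F_T \res S$. Hence the answers are unchanged, and the new configuration tightens $C$: old relations are untouched and new sets have merely been added. In particular $\emptyset \in \R$ afterwards, with $F_\emptyset$ either $\{\text{empty map}\}$ or $\emptyset$.

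\emph{Filtering to a fixpoint.} Repeat the following two reductions for pairs $S \subseteq T$ in $\R$ until nothing changes:
\begin{itemize}
\item replace $F_S$ by $F_S \cap (F_T \res S)$;
\item replace $F_T$ by $\{g \in F_T : g\res S \in F_S\}$, which is a semijoin.
\end{itemize}
Each reduction removes only mappings $h$ from some $F_U$ for which no answer restricts to $h$. Indeed, an answer $f$ has $f\res U \in F_U$ for every realized $U$, so the restrictions $f\res S$ and $f \res T$ survive both reductions. Therefore $\ans$ is preserved and every step is a tightening.

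At the fixpoint, for every $S \subseteq T$ we have both $F_S \subseteq F_T\res S$ and $F_T\res S\subseteq F_S$, so $F_T \res S = F_S$. The configuration is down-closed. It is non-empty unless $F_\emptyset=\emptyset$, in which case it is empty. This is exactly $\pi$-consistent or empty.

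\emph{Running time.} This is the main obstacle: a naive fixpoint iteration could take $\Theta(N)$ rounds of linear work each. The standard remedy is counter-based propagation, as in arc-consistency algorithms such as AC-4.

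For each pair $S\subseteq T$ and each $h \in F_S$, maintain a counter $c_{S,T}(h) = |\{g\in F_T : g\res S = h\}|$, together with index lists from $h$ to its extensions. These structures have total size $O(2^{2|V|}N)=O(N)$ and are built in linear time.

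Deletions are handled with a worklist:
\begin{itemize}
\item When $g \in F_T$ is deleted, decrement $c_{S,T}(g\res S)$ for every realized $S\subseteq T$. If a counter hits $0$, delete $g \res S$ from $F_S$.
\item When $h\in F_S$ is deleted, delete all its extensions in every $F_T$ with $T\supseteq S$, using the index lists.
\end{itemize}
Each mapping is deleted at most once, and each deletion costs $O(1)$ per pair $(S,T)$ plus the sizes of the lists it traverses. Every list entry is traversed at most once. The total time is therefore $O(N)$ for fixed $V$.

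At termination all counters are positive and no element has an unresolved deleted restriction. This is precisely the fixpoint condition above, which completes the argument.
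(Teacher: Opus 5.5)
Your proposal is correct and follows the same two-phase plan as the paper: down-close by adding projections, then delete unsupported functions until $\pi$-consistency holds or $F_\emptyset$ becomes empty. The paper gets linear time by citing the reduction to a minimum-model Horn-SAT computation (proof of Lemma~4.2 of Berkholz--Schweikardt), whereas you spell out AC-4-style counter propagation directly, which is in substance what linear-time Horn-SAT propagation does on that reduction; one detail to state explicitly is that the worklist must be seeded at initialization with every $h\in F_S$ whose counter starts at $0$ and every $g\in F_T$ with $g\res S\notin F_S$ (the latter can arise from your intersected down-closure or from the original relations), since your termination argument only covers restrictions that were deleted during propagation.
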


\begin{proof}
In brief, the algorithm acts as follows.
First, it down-closes the input configuration, if needed, by adding relations that are projections of existing ones; this can be done in linear time and without changing the answer set.  Then, it continually removes functions from relations until either $\pi$-consistency holds, or all relations become empty.
This removal process can be implemented in linear time,
by reduction to the problem of finding a minimum weight satisfying assignment to a set of Horn clauses, as explained in the proof of Lemma 4.2~\cite{BerkholzSchweikardt2020-arxiv}. 
\end{proof}

\begin{note}
\label{note:yannakakis}
Fix a hypergraph $H$ and a subset $X \subseteq V(H)$.
We will use the known fact that there is a version of the \rev{Yannakakis algorithm~\cite{DBLP:conf/vldb/Yannakakis81,AbiteboulHullVianu95-foundationsdatabases}}, denoted $\mathsf{Yannakakis}(C', X, \mathcal{T})$ that, given a $\pi$-consistent configuration $C'$ with hypergraph $H$ and a free-connex tree decomposition $\mathcal{T} = (T,\chi)$ of $H,X$, where all bags are realized in $C'$, acts as follows:
it restricts the tree decomposition to a subtree~$T'$ where
$\bigcup_{t \in V(T')} \chi(t) = X$, and then enumerates
the
set $\ans(C^-) \res X$ where $C^-$ is the configuration
obtained from $C'$ where
the realized sets are restricted to the sets $\chi(t)$.
It does this within time $O(N + \out')$
where $N$ is the size of $C'$, and $\out'$ is the size of the output set.
Clearly, we have
$(\ans(C') \res X) \subseteq (\ans(C^-) \res X)$.
\end{note}

\rev{
\section{Intuition}
\label{sec:intuition}
Before explaining the $\guanaco$ algorithm in detail, we present in this section the high-level intuition behind the algorithm and give an example execution.
To that end, consider the following 4-cycle query:
\begin{equation}
\label{eq:4cycle}
Q_\square() = R(X,Y), S(Y,Z), T(Z,W), U(W,X).
\end{equation}
It is well-known that $Q_\square$ has a {\em fractional hypertree width}~\cite{DBLP:conf/soda/GroheM06} of 2 and a submodular
width of 3/2.
This query has two free-connex tree decompositions, apart from the trivial one that consists of a single bag containing all four variables.
These two non-trivial tree decompositions are shown in Figure~\ref{fig:4cycle}, and they are as follows:
\begin{figure}[t]
    \centering
    \begin{tikzpicture}[scale = .4, every node/.style={scale=0.9}]
        \node[] at (0,0) (X) {$X$};
        \node[] at (8,0) (Z) {$Z$};
        \node[] at (4,2) (Y) {$Y$};
        \node[] at (4,-2) (W) {$W$};
        \draw[rotate around={+25:(2, +1)}, ] (2, +1) ellipse (3.2 and .75)
        node[shift={(-.45cm,+.45cm)},red]{$R$};
        \draw[rotate around={-25:(6, +1)}, ] (6, +1) ellipse (3.2 and .75)
        node[shift={(+.45cm,+.45cm)},red]{$S$};
        \draw[rotate around={+25:(6, -1)}, ] (6, -1) ellipse (3.2 and .75)
        node[shift={(+.45cm,-.45cm)},red]{$T$};
        \draw[rotate around={-25:(2, -1)}, ] (2, -1) ellipse (3.2 and .75)
        node[shift={(-.45cm,-.45cm)},red]{$U$};

        \node[draw,ellipse, right = 3 of Y] (xyz) {$X, Y, Z$};
        \node[draw,ellipse, below = 0.7 of xyz] (zwx) {$Z, W, X$};
        \draw (xyz) -- (zwx);
        \node[below = .1 of zwx] {\color{red}$(T_1,\chi_1)$};

        \coordinate (TD1) at ($(xyz)!0.5!(zwx)$);
        \node[draw,ellipse, right = 2 of TD1] (wxy) {$W, X, Y$};
        \node[draw,ellipse, right = 0.7 of wxy] (yzw) {$Y, Z, W$};
        \draw (wxy) -- (yzw);
        \coordinate (TD2) at ($(wxy)!0.5!(yzw)$);
        \node[below = 1.1 of TD2] {\color{red}$(T_2,\chi_2)$};
    \end{tikzpicture}
\caption{\rev{The 4-cycle query $Q_\square$ (Eq.~\eqref{eq:4cycle}) and the two non-trivial tree decompositions.}}
\label{fig:4cycle}
\end{figure}
\begin{itemize}
    \item A tree decomposition $(T_1,\chi_1)$ with two nodes $t_{11}$ and $t_{12}$ corresponding to two bags $\chi_1(t_{11}) = \{X, Y, Z\}$ and $\chi_1(t_{12}) = \{Z, W, X\}$.
    \item A tree decomposition $(T_2,\chi_2)$ with two nodes $t_{21}$ and $t_{22}$ corresponding to two bags $\chi_2(t_{21}) = \{W, X, Y\}$ and $\chi_2(t_{22}) = \{Y, Z, W\}$.
\end{itemize}
The fact that the fractional hypertree width of $Q_\square$ is 2 indicates that,
no matter which tree decomposition we choose,
there is a worst-case database instance where we cannot beat the $O(N^2)$ runtime, where $N$ is the input size.
Indeed, here is such an instance. Assume $N$ is an even number, and let $[N]$ denote the set $\{1, \ldots, N\}$. Let the relations $R, S, T, U$ all be identical
and contain the following tuples:
\begin{align*}
    R = S = T = U = ([N/2]\times \{1\}) \cup (\{1\}\times [N/2])
\end{align*}
In the configuration notation, the input is a configuration $C_0 = (\R_0, (F^0_S)_{S \in \R_0})$ where the domain $D=[N/2]$, $\R_0 = \{ \{X,Y\}, \{Y,Z\}, \{Z,W\}, \{W,X\} \}$,
$F^0_{\{X, Y\}}$ is defined as follows:
\begin{align}
    F^0_{\{X, Y\}} = \{f : \{X, Y\} \to D \mid (f(X), f(Y)) \in ([N/2]\times \{1\}) \cup (\{1\}\times [N/2])\},
    \label{eq:4cycle-input}
\end{align}
and $F^0_{\{Y, Z\}}$, $F^0_{\{Z, W\}}$, and $F^0_{\{W, X\}}$ are defined similarly.

It is not hard to see that no matter which tree decomposition we choose,
the intermediate relations corresponding to the bags will have size $\Theta(N^2)$.
However, it is well-known that this query admits an algorithm with runtime $O(N^{3/2})$~\cite{DBLP:journals/algorithmica/AlonYZ97} on any database instance
(where the exponent $3/2$ happens to be precisely the submodular width of $Q_\square$).
In contrast, our $\guanaco$ algorithm achieves a runtime of $O(N^{3/2+\epsilon})$ for any $\epsilon > 0$ on any database instance.
On the particular database instance specified above,
$\guanaco$ achieves a better runtime of $O(N)$.
We explain below how this is possible.

$\guanaco$ uses a subroutine, called $\cleanup$, that takes any configuration
$C = (\R, (F_S)_{S \in \R})$ and
partitions it into a constant number (in data complexity) of configurations that are $(M, \epsilon)$-clean, for some given parameters $M$ and $\epsilon$.
Recall that a configuration is $(M, \epsilon)$-clean if it is $\pi$-consistent and $(M, \epsilon)$-uniform.
Ensuring $\pi$-consistency is straightforward:
In the classic language of databases, this is equivalent to performing {\em semijoin-reductions} on the input relations until no further reductions are possible,
and this can be done in linear time (by Proposition~\ref{prop:establish-pi-consistency}).
On the other hand, ensuring $(M, \epsilon)$-uniformity is more involved.
Recall that uniformity means that for any two realized sets $S \subseteq T$ in the configuration, the maximum degree $\maxdeg_C(T|S)$ is not significantly larger than the average degree $\avgdeg_C(T|S)$.
Intuitively, this means that the degree $T|S$ should not have a very high variance, hence the name ``uniformity''.
Whenever we detect a violation of uniformity for some $S \subseteq T$,
the $\cleanup$ subroutine partitions the configuration $C$ into two configurations
$C^\ell$ and $C^h$ by partitioning the relation $F_S$ into two parts:
A ``light part'' $F_S^\ell$ where the degree of each tuple is ``small'',
and a ``heavy part'' $F_S^h$ where the degree of each tuple is ``large''.
Fixing the uniformity violation for $S \subseteq T$ might result in the new configurations $C^\ell$ and $C^h$ violating $\pi$-consistency, and similarly, fixing $\pi$-consistency might result in re-violating previously fixed uniformity violations.
Therefore, the $\cleanup$ subroutine iteratively alternates between both until both properties are satisfied, and returns the resulting configurations.
It is not obvious that this process terminates.
We will see in Section~\ref{sec:cleanup} that the $\cleanup$ subroutine does indeed terminate and return a constant number of $(M, \epsilon)$-clean configurations.
In our example, the $\cleanup$ subroutine will partition the input configuration $C_0$
by partitioning $F^0_{\{X, Y\}}$ into $F_{\{X, Y\}}^\ell$ and $F_{\{X, Y\}}^h$ that are defined as follows:
\begin{align}
    F_{\{X, Y\}}^\ell &= \{f : \{X, Y\} \to D \mid (f(X), f(Y)) \in [N/2]\times \{1\} \},\\
    F_{\{X, Y\}}^h &= \{f : \{X, Y\} \to D \mid (f(X), f(Y)) \in \{1\}\times [N/2]\}.
\end{align}
This is because $X$-values in $F^0_{\{X, Y\}}$ have a maximum degree of $N/2$ and an average degree of $O(1)$.
Hence, the $X$-value of $1$ is ``heavy'' and the other $X$-values are ``light''.
The other relations $F^0_{\{Y, Z\}}$, $F^0_{\{Z, W\}}$, and $F^0_{\{W, X\}}$ are partitioned similarly.
At the end, the $\cleanup$ subroutine will return at most 16 configurations, capturing the combinations where each of the four variables $X, Y, Z, W$ is either heavy or light.
(The actual number of configurations might vary depending on some implementation details of $\cleanup$, but it will always be a constant number in data complexity.)

The outer $\guanaco$ algorithm is based on the following idea.
Consider an instance, represented as a configuration $C_0 = (\R_0, (F^0_S)_{S \in \R_0})$ of size $N$, for example, the configuration described in Eq.~\eqref{eq:4cycle-input} for the 4-cycle query.
Consider any configuration $C = (\R, (F_S)_{S \in \R})$ that tightens $C_0$ and is $(N, \epsilon)$-clean.
For simplicity, suppose that $\epsilon=0$, hence $C$ is ``perfectly uniform'',
meaning that, for any two realized sets $S \subseteq T$, the average degree and maximum degree are the same.
In our example, $C$ could be one of the 16 configurations returned by the $\cleanup$ subroutine.
Let $w$ denote the submodular width of the query $Q_\square$, which is 3/2 in this example.
Suppose that all realized relations $F_S$ in $C$ satisfy $|F_S| \leq N^w$.
Consider the following set function
$g: \wp(\{X, Y, Z, W\}) \to \mathbb{R}$:
\begin{align}
g(U) =
\begin{cases}
\log_N |F_U| & \text{if $U \in \R$, that is, $U$ is realized (which implies $\log_N|F_U| \leq w$)} \\
w & \text{otherwise } \\
\end{cases}
\end{align}
Then, $g$ is a polymatroid.
Indeed, the monotonicity of $g$ follows from the fact that $C$ is $\pi$-consistent, and the submodularity of $g$ follows from the fact that $C$ is $(N, 0)$-uniform.
To see why submodularity holds, consider any two realized sets $S, T \in \R$ where $S\cup T$
is realized (hence down-closedness implies that $S \cap T$ is realized as well).
Then, we have:
\begin{align*}
    |F_{S\cup T}| \leq |F_S| \cdot \maxdeg_C(T|S \cap T) = |F_S| \cdot \avgdeg_C(T|S \cap T) = |F_S| \cdot \frac{|F_T|}{|F_{S\cap T}|},
\end{align*}
By taking logarithms, we obtain the submodularity $g(S\cup T) + g(S\cap T) \leq g(S) + g(T)$.
Other submodularity cases can be verified similarly (under mild conditions).
Moreover, since $C$ tightens $C_0$, the polymatroid $g$ is edge-dominated.
But then, recall from Section~\ref{subsec:submodular-width}
that the submodular width is a maximum over all edge-dominated polymatroids, including the above $g$.
We will see later in Section~\ref{sec:guanaco} that this implies either one of two things:
\begin{itemize}
    \item Either there exists a free-connex tree decomposition of the query
    whose bags are already realized in $C$, in which case we can answer the query by running the Yannakakis algorithm on $C$.
    \item Or every realized relation $F_S$ satisfies $\log_N |F_S| \leq w$ {\em and} there exists a ``realizable pair'' $(S, T)$, i.e.,
    a pair of realized sets $(S, T)$ where $S \cup T$ is not realized in $C$ and $g(S) + g(T) - g(S \cap T) \leq w$.
    If this is the case, then we can join the two relations $F_S$ and $F_T$
    to create a new relation $F_{S\cup T}$, and add it to the configuration.
    The size of the new relation $F_{S\cup T}$ is upper bounded by $|F_S|\cdot|F_T|/|F_{S\cap T}|$. On $\log_N$-scale, this is equivalent to $g(S) + g(T) - g(S \cap T)\leq w$, hence $|F_{S\cup T}|\leq N^w$.
    Adding this extra relation $F_{S \cup T}$ to the configuration might violate $\pi$-consistency and $(N, \epsilon)$-uniformity, which is why we need to run $\cleanup$ again.
\end{itemize}
In our example, the $\guanaco$ algorithm initially calls the $\cleanup$
subroutine on the the input configuration $C_0 = (\R_0, (F^0_S)_{S \in \R_0})$ described by Eq.~\eqref{eq:4cycle-input} and forth, and obtains a constant number of $(N, \epsilon)$-clean configurations, as explained before.
WLOG consider the configuration $C'=(\R_0, (F'_S)_{S \in \R_0})$ where $Y$ and $W$ are ``light'' and $X$ and $Z$ are ``heavy''.
In this configuration $C'$, the realized relations are as follows:
\begin{align}
    F'_{\{X, Y\}} &= \{f : \{X, Y\} \to D \mid (f(X), f(Y)) \in \{1\} \times [N/2] \},\\
    F'_{\{Y, Z\}} &= \{f : \{Y, Z\} \to D \mid (f(Y), f(Z)) \in [N/2] \times \{1\}\},\\
    F'_{\{Z, W\}} &= \{f : \{Z, W\} \to D \mid (f(Z), f(W)) \in \{1\} \times [N/2]\},\\
    F'_{\{W, X\}} &= \{f : \{W, X\} \to D \mid (f(W), f(X)) \in [N/2] \times \{1\} \}.
\end{align}
The corresponding polymatroid $g$ is as follows:
\begin{align}
    g(U) &= \min(3/2, |U \cap \{Y, W\}|), \quad\quad \text{for all } U \subseteq \{X, Y, Z, W\}.
\end{align}
This is indeed a polymatroid that is edge-dominated.
Since no tree decomposition from Figure~\ref{fig:4cycle} already has both bags
realized in $C'$,
it must be the case that $C'$ contains at least one realizable pair.
In fact, $C'$ contains two realizable pairs in this case:
\begin{itemize}
    \item The pair $(\{X, Y\}, \{Y, Z\})$ where $g(\{X, Y\}) + g(\{Y, Z\}) - g(\{Y\}) = 1 + 1 - 1 = 1\leq 3/2$. We can join the two relations $F'_{\{X, Y\}}$ and $F'_{\{Y, Z\}}$ to create a new relation $F'_{\{X, Y, Z\}}$ of size $O(N)$.
    \item The pair $(\{Z, W\}, \{W, X\})$ where $g(\{Z, W\}) + g(\{W, X\}) - g(\{W\}) = 1 + 1 - 1 = 1\leq 3/2$. We can join the two relations $F'_{\{Z, W\}}$ and $F'_{\{W, X\}}$ to create a new relation $F'_{\{W, X, Z\}}$ of size $O(N)$.
\end{itemize}
After creating the above two relations, we can run the Yannakakis algorithm using the tree decomposition $(T_1, \chi_1)$.
Hence, the configuration $C'$ takes time $O(N)$ overall.
The other 15 configurations take similar time and use either one of the two tree decompositions.
}

\section{Cleanup subroutine}
\label{sec:cleanup}

In this section, we introduce a basic subroutine called \emph{Cleanup} that interleaves establishing $\pi$-consistency with establishing global uniformity until clean configurations are obtained.  This subroutine and its analysis can be viewed as formulations of similar ones due to Marx~\cite{DBLP:journals/jacm/Marx13}.\footnote{\rev{Indeed, Lemma~\ref{lemma:cleanup} below uses the same core argumentation as Lemma 4.11 in Marx~\cite{DBLP:journals/jacm/Marx13}.  The notion of uniformity employed by Marx is different from that employed here.}}
More precisely, relative to a value $\epsilon > 0$ and a value $M \geq 1$, the \emph{Cleanup} algorithm, when
given a configuration
$C$ over variables $V$, outputs a finite set of down-closed configurations
that represent~$C$, have the same realized sets as $C$ \rev{(up to down-closing)}, and are each $(M,\epsilon)$-clean.
Moreover, there exists a computable function $g$ such that
the algorithm runs within time $O(g(|V|, \epsilon) \cdot N)$,
where~$N$ denotes the input database size.
The algorithm description uses the following notions,
relative to a configuration
$C~=~(\R, (F_S)_{S \in \R})$.
Say that a pair $S,T$ \emph{violates}
$(M,\epsilon)$-uniformity in $C$
if $S \subseteq T \in \R$ and
$\maxdeg_C(T|S) > M^{\epsilon} \avgdeg_C(T|S)$.
When $S \subseteq T \in \R$, and $f \in F_S$,
define  $\deg_C(T|f)$ as $|\{ g \in F_T ~|~ \textup{$g$ extends $f$} \}|$,
that is, $\deg_C(T|f)$ is the number of extensions of $f$ in $F_T$.


\begin{algorithm}
    \caption{
$\cleanup_{\epsilon}$(value $M$, configuration $C$); assumes $\epsilon > 0, M \geq 1$
}
    \label{algo:cleanup}
    \begin{algorithmic}[1]
    	\State Establish $\pi$-consistency on $C$ (using the algorithm of Proposition~\ref{prop:establish-pi-consistency})
    	\If{$C$ is empty}
    		\State Return the empty set $\emptyset$
    	\EndIf
        \If{there exists a pair violating $(M,\epsilon)$-uniformity in $C$}
            \State Let $S$, $T$ be such a pair                        
            \State
Assign $F_S^\ell :=
\{ f \in F_S ~|~ \deg_C(T|f) \leq \sqrt{M^{\epsilon}} \avgdeg_C(T|S) \}$

            \State
Assign $F_S^h :=
\{ f \in F_S ~|~ \deg_C(T|f) > \sqrt{M^{\epsilon}} \avgdeg_C(T|S) \}$
			\State
Assign $C^\ell$ and $C^h$ to be equal to $C$ but where
$F_S$ is replaced by $F_S^\ell$ and $F_S^h$, respectively
			\State Return $\cleanup(M, C^\ell) \cup \cleanup(M, C^h)$
		\Else
			\State Return $\{ C \}$ \label{alg:cleanup-return}
        \EndIf
   \end{algorithmic}
\end{algorithm}

\begin{lemma}
\label{lemma:cleanup}
Let $\epsilon > 0, w^+ \geq 1$ be constants; there is a function $d(\cdot) \in O(3^n)$ such that
the following holds.
Assume that
$\cleanup_{\epsilon}$ is run on $M \geq 1$ and
a configuration $C$ where
every relation has size at most~$M^{w^+}$;
then,
the length of any root-leaf path in the recursion tree (that is, the depth of the recursion tree) is bounded above by $d(|V|)$, where $V$ is the variable set of $C$.
In particular, the algorithm terminates.
\end{lemma}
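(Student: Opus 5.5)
We bound the depth of the recursion tree by a potential argument in the style of Marx's Lemma 4.11. Along any root-leaf path, each split step chooses a violating pair $S \subseteq T$ and replaces $F_S$ by $F_S^\ell$ or $F_S^h$. We track, for each pair $(S,T)$ of realized sets with $S \subsetneq T$, the quantity $\avgdeg_C(T|S) = |F_T|/|F_S|$. We also track the sizes $|F_U|$ themselves. The goal is to show that every split step changes one of finitely many integer-valued "levels", in a way that can happen only boundedly often with a bound depending only on $|V|$ (and on $\epsilon, w^+$). The target is then $d(|V|)$ proportional to the number of pairs $(S,T)$ with $S\subseteq T\subseteq V$, which is $3^{|V|}$.

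\textbf{Step 1: what a split does.} Let the split be on the violating pair $(S,T)$ with $a := \avgdeg_C(T|S)$. Then $\maxdeg_C(T|S) > M^{\epsilon} a$.

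In the light branch, $F_S^\ell$ loses at least one $f$, namely one of degree above $M^\epsilon a$. After re-establishing $\pi$-consistency, $F_T$ is restricted to extensions of the surviving $f$, whose degrees are all at most $M^{\epsilon/2}a$. So the new maximum degree, and hence the new average degree $\avgdeg(T|S)$, is at most $M^{\epsilon/2}a$.

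In the heavy branch, every surviving $f$ has degree above $M^{\epsilon/2}a$. Since degrees only shrink under later tightening, the new average degree of $T$ over $S$ is at least $M^{\epsilon/2}a$, at least immediately after the split. Pruning by $\pi$-consistency might lower some degrees, so this needs care. The cleaner quantity in the heavy branch is $|F_S|$: since $\sum_f \deg(T|f) = |F_T| = a|F_S|$, at most $|F_S|/M^{\epsilon/2}$ functions are heavy. So $|F_S|$ drops by a factor of $M^{\epsilon/2}$.

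In the light branch I would likewise try to find a size that shrinks by a multiplicative factor $M^{\epsilon/2}$. The natural candidate is $|F_T|$, because $F_T$ keeps only extensions of light $f$. This is not immediate, since light $f$ may carry most of $F_T$. So the light branch is where a more delicate potential is needed.

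\textbf{Step 2: the potential.} All sizes lie in $[1, M^{w^+}]$, since relations only shrink and the configuration stays non-empty. Every $\log_M$ of a size or of an average degree therefore lies in $[-w^+, w^+]$. The plan is a lexicographic potential over pairs $(S,T)$, ordered by $|T\setminus S|$ or by inclusion.

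For each pair, the split either decreases $\log_M|F_S|$ by $\epsilon/2$ (heavy branch), or moves $\log_M \avgdeg(T|S)$ down by $\epsilon/2$ relative to the old maximum (light branch). Each such coordinate can move by $\epsilon/2$ in one direction only $O(w^+/\epsilon)$ times before leaving its bounded range.

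The obstacle is that $\pi$-consistency pruning after a split can perturb other pairs' average degrees in the wrong direction. Following Marx, I would instead use the monotone quantities $\log_M |F_U|$ for realized $U$, which never increase along a path. In the light branch I would argue that the violation forces a decrease: if $|F_T|$ does not drop by much, then the light part carries most of $F_T$. Then either $|F_S|$ drops a lot, or the new maximum degree is at most $M^{\epsilon/2}$ times the new average degree. Bucketing each $\log_M|F_U|$ into $O(w^+/\epsilon)$ intervals, and charging every split to a pair $(S,T)$ for which a bucket change or a certified "fixed" status occurs, gives a bound of $O(3^{|V|})$ charges per path, with the constant depending on $\epsilon$ and $w^+$.

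\textbf{Main obstacle.} The heart of the proof is the accounting in the light branch and the interaction with re-pruning. I would first try to show that a pair once fixed cannot become violated again unless some $|F_U|$ with $U \subseteq T$ changes bucket. Combining this with the bounded number of bucket changes yields the depth bound $d(|V|)$, and hence termination.
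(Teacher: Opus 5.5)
\textbf{Gap.} As written, your proposal does not prove the lemma. The light branch, which you correctly identify as the crux, is left as an intention: buckets, a charging scheme, and an unproved claim that a ``fixed'' pair stays fixed until some $|F_U|$ changes bucket. Also, a lexicographic potential over the $3^{|V|}$ pairs would by itself only give a depth bound of order $(w^+/\epsilon)^{3^{|V|}}$, not the required $d \in O(3^{n})$; you need an additive potential.

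The missing idea is to track maximum degrees rather than average degrees or sizes:
\[
W = \sum_{S \subseteq T \in \R} \log_M \maxdeg_C(T|S).
\]
Maximum degree never increases under \emph{any} tightening, neither under the split nor under the subsequent $\pi$-consistency pruning. So the ``wrong-direction'' perturbations you worry about cannot occur for this potential.

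Your Step~1 already contains both decrease facts.
\begin{itemize}
\item In the light branch, the new $\maxdeg(T|S)$ is at most $M^{\epsilon/2}a$, while the violation gives $a < M^{-\epsilon}\maxdeg_{\mathrm{old}}(T|S)$. So the $(S,T)$ term falls by more than $\epsilon/2$.
\item In the heavy branch, $|F_S| = \maxdeg(S|\emptyset)$ falls by a factor $M^{\epsilon/2}$. So the $(\emptyset,S)$ term falls by $\epsilon/2$.
\end{itemize}
Every term lies in $[0,w^+]$, since in a non-empty $\pi$-consistent configuration each maximum degree is at least $1$. Hence $W \le 3^{|V|}w^+$, and the depth is at most $2\cdot 3^{|V|}w^+/\epsilon$. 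Your search for a relation size that shrinks in the light branch is a red herring: $|F_T|$ may barely shrink there, but $\maxdeg(T|S)$ does.

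If you prefer to keep a size-based accounting, it can be closed without buckets. Suppose a light split on $(S,T)$ is made when $|F_S| = s_0$ and $|F_T| = t_0$. Then $\maxdeg(T|S) \le M^{\epsilon/2}t_0/s_0$ from then on along that path. Since $|F_S|$ only shrinks, a later violation of $(S,T)$ forces $|F_T| < M^{-\epsilon/2}t_0$. Hence each pair is light-split at most $1 + 2w^+/\epsilon$ times, and each $S$ is heavy-split at most $2w^+/\epsilon$ times, which again yields $O(3^{|V|})$. Either way, this decisive step is absent from your proposal.
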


\begin{proof}
In the case that $M=1$, it is evident that the algorithm does not recurse and that the lemma holds, so for the remainder of the proof, we assume that $M > 1$.

Let $C^0, \ldots, C^p$ be the configurations of any root-leaf path in the recursion tree; precisely, we consider the configurations right after $\pi$-consistency is established.
Observe that all configurations ever processed by the algorithm share the same set $\R$ of realized sets.
Let us denote $C^j = (\R, (F^j_S)_{S \in \R})$.
We define the potential function
\[W^j = \sum_{S, T: S \subseteq T \in \R} \log_M \maxdeg_{C^j}(T|S). \]
By assumption, each relation has size at most $M^{w^+}$, so each term of the sum is at most $w^+$.  The number of terms is at most $3^{|V|}$, so we have $W^0 \leq 3^{|V|} w^+$.
We show that in each recursive call, this function decreases by a sufficient amount so as to allow us to bound the path length as desired.

Suppose $j$ is an index with $0 \leq j < p$.
We claim that $W^{j+1} \leq W^j - \epsilon/2$.
We consider two cases depending on how $C^{j+1}$
is derived from $C^j$.
\begin{itemize}

\item If $C^{j+1}$ is derived from $(C^j)^\ell$,
we have
\begin{align*}
\log_M \maxdeg_{C^{j+1}} (T|S)
& \leq \log_M \maxdeg_{(C^j)^\ell} (T|S) \\
& \leq \log_M (M^{\epsilon/2} \avgdeg_{C^j}(T|S)) \\
& <    \log_M (M^{\epsilon/2} \frac{\maxdeg_{C^j}(T|S)}{M^{\epsilon}}) \\
& = (\log_M \maxdeg_{C^j}\rev{(T|S)}) - (\epsilon/2).
\end{align*}
The first inequality follows from the fact that establishing $\pi$-consistency only removes tuples from relations and cannot increase maximum degree; the second inequality comes from the definition of $(C^j)^\ell$ from $C^j$, and the third inequality comes from $S,T$ being a violation of uniformity.

\item
If $C^{j+1}$ is derived from $(C^j)^h$,
we first observe that each $f \in (F_S^j)^h$
has at least \\
$\sqrt{M^{\epsilon}} \avgdeg_{C^j}(T|S) =
 \sqrt{M^{\epsilon}} \frac{|F^j_T|}{|F^j_S|}$
extensions in $F_T^j$.  We then have
$$|(F_S^j)^h| \leq |F_T^j| / (\sqrt{M^{\epsilon}} \frac{|F^j_T|}{|F^j_S|})  = \frac{|F^j_S|}{\sqrt{M^\epsilon}}.$$

It follows that
\begin{align*}
\log_M \maxdeg_{C^{j+1}} (S|\emptyset)
& \leq \log_M \maxdeg_{(C^j)^h} (S|\emptyset) \\
& = \log_M |(F_S^j)^h| \\
& \leq (\log_M |F^j_S|) - (\epsilon/2) \\
& = (\log_M \maxdeg_{C^j} (S|\emptyset)) - (\epsilon/2).
\end{align*}
Here, the first inequality is justified in the same way as the first inequality in the previous case, and the second inequality follows from the observation.
\end{itemize}

It follows from the claim that the potential function decreases by at least $\epsilon/2$ at each stage, and hence we have
$p \leq \frac{3^{|V|} w^+}{\epsilon/2}$;
for fixed $\epsilon$ and $w^+$, this bound
is $O(3^{|V|})$.
\end{proof}

The following theorem summarizes the main properties of the $\cleanup$ algorithm to be used.

\begin{theorem}
\label{thm:cleanup}
Fix constants $\epsilon > 0, w^+ \geq 1$, and a variable set $V$.
Assume that the $\cleanup_{\epsilon}$ algorithm is invoked on $M \geq 1$ and a configuration $C$ on variable set $V$
where each relation has size at most $M^{w^+}$;
the algorithm returns a set
$\C$ of configurations representing $C$ and having size
$|\C| \leq 2^{O(3^{|V|})}$, where
each configuration in $\C$ is $(M,\epsilon)$-clean, tightens $C$, and has the same realized sets as $C$ does.
Moreover, the algorithm admits an implementation running within time $O(N)$.
\end{theorem}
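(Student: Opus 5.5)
The plan is to argue by induction on the recursion tree of $\cleanup_{\epsilon}(M,C)$, using Lemma~\ref{lemma:cleanup} to control its shape. By that lemma every root-leaf path has length at most $d(|V|) \in O(3^{|V|})$. Each internal node makes exactly two recursive calls, so the tree has at most $2^{d(|V|)+1}$ nodes and at most $2^{d(|V|)}$ leaves. Only leaves contribute configurations to the output, via line~\ref{alg:cleanup-return}, and each contributes at most one. This gives $|\C| \leq 2^{O(3^{|V|})}$. One point to check is that the hypothesis of Lemma~\ref{lemma:cleanup} is preserved at every recursive call: each relation still has size at most $M^{w^+}$, because $\pi$-consistency and splitting only remove tuples. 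Down-closing only adds projections, which are no larger than the relations they are projected from.

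For the structural properties, I will prove by induction on the height of a call that the returned set represents its input and that each returned configuration is $(M,\epsilon)$-clean, tightens the input, and has the input's realized sets (up to down-closing).

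First, establishing $\pi$-consistency preserves answers and yields a tightening, by Proposition~\ref{prop:establish-pi-consistency}. If the result is empty, there are no answers, so returning $\emptyset$ correctly represents $C$. At a leaf, $C$ is $\pi$-consistent, non-empty, and has no violating pair, so it is $(M,\epsilon)$-clean. At an internal node, $F_S$ is partitioned into $F_S^\ell \cup F_S^h$. Every answer $f$ has $f\res S$ in exactly one part, so $\ans(C)=\ans(C^\ell)\cup\ans(C^h)$. Both $C^\ell$ and $C^h$ tighten $C$ and have the same realized sets. Tightening and preservation of realized sets are transitive, so the induction closes.

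For the running time, each node of the recursion tree does linear work in the size of its configuration, which is at most $N$ after the initial down-closing (at most a constant factor $2^{|V|}$ blowup). That work consists of:
\begin{itemize}
\item the $\pi$-consistency step, which is linear by Proposition~\ref{prop:establish-pi-consistency};
\item checking for a violating pair, which requires computing $\deg_C(T|f)$ for every $f \in F_S$ over the constantly many pairs $S\subseteq T$, done by grouping $F_T$ on $S$ (hashing, or a linear-time sort in the RAM model);
\item building $F_S^\ell$ and $F_S^h$.
\end{itemize}
Multiplying linear work per node by the constant number of nodes, $2^{O(3^{|V|})}$, gives total time $O(N)$ for fixed $V$, $\epsilon$ and $w^+$.

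The main subtlety is the interaction between down-closing and the claim about realized sets, together with making the degree computations genuinely linear. For the first, I would down-close once at the root, so that all subsequent configurations share the same $\R$ as in Lemma~\ref{lemma:cleanup}. For the second, I would rely on standard linear-time grouping of relations by a projection.
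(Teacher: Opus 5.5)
Your proposal is correct and follows essentially the same route as the paper: you bound the recursion tree via Lemma~\ref{lemma:cleanup}, get representation from $\ans(C)=\ans(C^\ell)\cup\ans(C^h)$, get cleanliness at the return line, and multiply linear work per node by a constant number of nodes. Your extra care about down-closing preserving the $M^{w^+}$ size bound, and about computing degrees in linear time by grouping, fills in details the paper treats as evident.
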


\begin{proof}
Since, by Lemma~\ref{lemma:cleanup}, the depth of the recursion tree is bounded by $O(3^{|V|})$, the size of the tree is bounded by $2^{O(3^{|V|})}$; this latter value in turn bounds the size of the output set $\C$, which contains one configuration per leaf of the recursion tree.
The set $\C$ represents the configuration~$C$ due to the fact that when $C$ is partitioned into $C^\ell$ and $C^h$, we evidently have
$\ans(C) = \ans(C^\ell) \cup \ans(C^h)$.
Each configuration in $\C$ is $(M,\epsilon)$-clean, since whenever line~\ref{alg:cleanup-return} is reached, the configuration returned is both $\pi$-consistent and has no pair violating $(M,\epsilon)$-uniformity; also, each such configuration evidently tightens $C$ and has the same realized sets as $C$ does.
The main computations performed at each node of the recursion tree are the step that establishes $\pi$-consistency, which runs in linear time due to Proposition~\ref{prop:establish-pi-consistency}, and the partitioning of a relation, which is evidently in linear time.
\end{proof}

\section{Guanaco algorithm}
\label{sec:guanaco}

We present our main algorithm and its analysis.
Let us first give a brief intuitive description of the
algorithm.  The main algorithm $\guanaco$ receives
a configuration $C_0$, and computes some values.
It then calls the main recursive subroutine,
called $\guanacorec$, on $C_0$.
This subroutine calls $\cleanup$ once,
and then for each resulting configuration $C'$,
it tries to \emph{realize} a new relation if possible;
if this is possible, it then calls itself recursively
on the consequent configuration, and
in the case that this is not possible,
it finds a tree decomposition and runs the Yannakakis
algorithm to enumerate the needed answers.
(Note that in this case, although we present the algorithm $\guanacorec$ as attempting to find a tree decomposition of a certain type, it will turn out that such a tree decomposition will always be found.)
Thus, on a high level, this subroutine interleaves $\cleanup$ with the creation of new relations.

How are new relations created?  The following
definition will govern when this is done.
Say that a pair $(S,T)$ is \emph{$(M,w^+)$-realizable},
with respect to a configuration
$C' = (\R', (F'_S)_{S \in \R'})$,
if the following hold: $S, T \in \R'$, $S \cup T \notin \R'$, and
\rev{$\log_M(|F'_S| \cdot \maxdeg_{C'}(T | S \cap T)) \leq w^+$.}

\begin{note}
\label{note:realization}
Let $(S,T)$ be an $(M,w^+)$-realizable pair
with respect to a configuration $C'$.
We speak of \emph{realizing} such a pair $(S,T)$
on configuration $C'$, which means that $S \cup T$
is added to $\R'$ and a relation $F'_{S \cup T}$ is created,
by considering each pair of assignments
$f \in F'_S$ and $g \in F'_T$ where $g$ extends $f \res S \cap T$,
and including (for each such pair of assignments) the assignment $f \cup g$ in
$F'_{S \cup T}$.
(Here, by $f \cup g$, we are viewing assignments as sets of pairs.)
The number of pairs $f,g$ that have to be considered, by definition of $(M,w^+)$-realizable, is less than or equal to $M^{w^+}$, as is the size of the resulting relation $F'_{S \cup T}$.  Thus in effect, this process of \emph{realizing} a pair amounts to taking a binary join whose output size is controlled.
\end{note}

\begin{algorithm}
    \caption{
$\guanaco_{H,X,\delta}$(configuration $C_0$); assumes $\delta > 0$,
$C_0$ has hypergraph $H$, and $X \subseteq V(H)$}
    \label{algo:guanaco}
    \begin{algorithmic}[1]
    	\State
Assign $w := \subw(H, X)$ \algorithmiccomment{\rev{See Section~\ref{subsec:submodular-width} for how to compute this}}
		\State Assign $w^+ := w(1+\delta)$
		\State Assign $\epsilon := \min(( 1 - \frac{1}{1+\delta})^{4}, \frac{1}{|V|^{4}} )$, where $V$ is the variable set of $C_0$
    	\State Assign $M$ to be the maximum relation size of $C_0$
    	\If{$M$ = 0}
    		\State Return the empty set $\emptyset$
    		\label{alg:guanaco-emptyset}
    	\EndIf
    	\State Return $\guanacorec_{H, X, M,\epsilon,w^+}(C_0)$
    \end{algorithmic}
\end{algorithm}

\begin{algorithm}
    \caption{
$\guanacorec_{H, X, M,\epsilon,w^+}$(configuration $C$); assumes $M \geq 1, w^+ \geq 1$; $\epsilon > 0$; and that $H, X$ are as in $\guanaco_{H,X,\delta}$}
    \label{algo:guanacorec}
    \begin{algorithmic}[1]
    	\State Assign $\C' := \cleanup_{\epsilon}(M, C)$ \label{alg:gr-cleanup}
    	\State Initialize $A := \emptyset$
    	\For{each configuration $C' \in \C'$}
    		\If{$C'$ has an $(M,w^+)$-realizable pair}
    		\label{alg:gr-check-realizable}
    			\State Let $C''$ be the configuration resulting from realizing such a pair on $C'$
    			\label{alg:gr-realize}
    			\State Assign $A := A \cup \guanacorec_{H, X, M,\epsilon,w^+}(C'')$
    			\label{alg:guanacorec-recurse}
    		\Else
    			\State Search for a free-connex tree decomp.  $\mathcal{T}$ of $H,X$ where all bags are realized, wrt $C'$
    			\label{alg:gr-find-td}
    			\State Assign $A := A \cup \mathsf{Yannakakis}(C', X, \mathcal{T})$ \quad (assuming such a TD $\mathcal{T}$ was found)    			\label{alg:gr-yannakakis}
    		\EndIf
    	\EndFor
    	\State Return $A$
    	\label{alg:gr-return-at-end}
    \end{algorithmic}
\end{algorithm}

We first show termination of the main algorithm, along with a time bound.
Here and throughout this section, we use $\out$ to denote the output size.

\begin{lemma}
\label{lemma:guanaco-termination}
(termination/time bound)
Fix a hypergraph $H$ with vertex set $V$;
fix $X \subseteq V$ and $\delta > 0$.
On any input configuration $C_0$ whose hypergraph is $H$,
the call $\guanaco_{H,X,\delta}(C_0)$ terminates
and runs within time
$O(N^{\subw(H,X) + \delta} + \out)$.
\end{lemma}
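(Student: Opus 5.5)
We bound the recursion tree of $\guanacorec$ and the work done at each of its nodes. Note first that $M \le N$, since $M$ is the maximum relation size of $C_0$. In the case $M=0$ the algorithm returns at line~\ref{alg:guanaco-emptyset}, so assume $M \ge 1$.

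\textbf{Invariant on sizes.} Throughout the recursion, every configuration passed to $\guanacorec$ tightens $C_0$ up to added realized sets. Its relations on sets realized in $C_0$ have size at most $M \le M^{w^+}$, since $w^+ \ge 1$. Each newly realized relation has size at most $M^{w^+}$ by Note~\ref{note:realization}. Cleanup only removes tuples, as does $\pi$-consistency, and down-closing adds only projections. Hence the hypothesis of Theorem~\ref{thm:cleanup} holds at every call with the fixed constants $\epsilon, w^+$.

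\textbf{Depth and branching of the recursion.} Each recursive call at line~\ref{alg:guanacorec-recurse} adds a new realized set $S \cup T \notin \R'$. By Theorem~\ref{thm:cleanup}, Cleanup preserves the realized sets up to down-closing, and the down-closure of a subset of $\wp(V)$ stays inside $\wp(V)$. So the number of realized sets strictly increases along any root-leaf path, which gives depth at most $2^{|V|}$. By Theorem~\ref{thm:cleanup}, each call branches into at most $2^{O(3^{|V|})}$ configurations. The recursion tree therefore has a constant number of nodes in data complexity, which gives termination.

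\textbf{Cost per node.} A node running on a configuration of size $O(M^{w^+})$ spends:
\begin{itemize}
\item $O(M^{w^+})$ on Cleanup, since it is linear time by Theorem~\ref{thm:cleanup};
\item constant time searching for a realizable pair, treating the degrees as computable in linear time;
\item $O(M^{w^+})$ realizing the pair, by Note~\ref{note:realization};
\item constant time on the tree-decomposition search, since $H$ is fixed.
\end{itemize}
The Yannakakis call costs $O(M^{w^+} + \out')$ by Note~\ref{note:yannakakis}. Here $\out'$ is the size of $\ans(C^-)\res X$, and this is the step I expect to be the main obstacle: $\out'$ may exceed the true $\out$, since $(\ans(C')\res X) \subseteq (\ans(C^-)\res X)$ and the inclusion can be strict.

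\textbf{Bounding $\out'$.} Let $C^-$ be the configuration on the bags of the free-connex subtree. I would argue that $\pi$-consistency of $C'$ makes $C^-$ $\pi$-consistent as well, since its relations are those of $C'$ on the bags and their subsets. On an acyclic, free-connex, $\pi$-consistent configuration, every tuple in each bag relation extends to a global answer of $C^-$. I would then want $\ans(C^-)\res X \subseteq \ans(C_0)\res X$. The difficulty is that an answer of $C^-$ might violate relations not covered by the bags of $T'$. The remaining bags of $T$ cover every edge of $H$, and $\pi$-consistency lets us extend an answer over $X$ to the rest of the tree. So, via the global-consistency property of acyclic $\pi$-consistent configurations applied to all of $\mathcal{T}$, $\ans(C^-)\res X = \ans(C^{\mathcal{T}})\res X$, where $C^{\mathcal{T}}$ is $C'$ restricted to all the bags of $\mathcal{T}$. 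Since these bags cover $E(H)$ and the relations tighten $C_0$, this projection is contained in $\ans(C')\res X \subseteq \ans(C_0)\res X$. Thus $\out' \le \out$. If this equality turns out to need additional care, I would instead show directly that each enumerated tuple extends along $T \setminus T'$ using consistency between adjacent bags.

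\textbf{Conclusion.} Summing over the constantly many nodes gives time $O(M^{w^+} + \out) = O(N^{w(1+\delta)} + \out)$. Running $\guanaco$ with parameter $\delta/w$ in place of $\delta$ (allowed since $w \ge 1$ is fixed) yields $O(N^{\subw(H,X)+\delta} + \out)$. Correctness of the claimed time bound does not depend on whether the tree-decomposition search succeeds.
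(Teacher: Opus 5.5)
Your argument follows the paper's proof. Depth is at most $2^{|V|}$ because every recursive call adds a realized set that Cleanup never removes, and branching is bounded via Theorem~\ref{thm:cleanup}. The per-node cost is dominated by Cleanup, the realization step ($M^{w^+}$), and the Yannakakis call ($O(M^{w^+}+\out')$). You are more careful than the paper in two places. You check that Cleanup's size hypothesis holds at every call. You also notice that Yannakakis outputs $\ans(C^-)\res X$, not $\ans(C')\res X$; the paper simply takes $\out'$ to be $|\ans(C')\res X|$.

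One step of your $\out'$ argument is false, though the conclusion survives. The containment $\ans(C^{\mathcal T})\res X \subseteq \ans(C')\res X$ fails in general. $C'$ may contain realized sets not contained in any bag of $\mathcal T$, such as a union $S\cup T$ realized at an earlier level, or one of its subsets that Cleanup then split into light and heavy parts. Answers of $C^{\mathcal T}$ need not respect those relations. Drop the middle term: the reasons you cite already give the containment you need. Each $e\in E(H)$ lies in some bag $\chi(t)$, and down-closure, $\pi$-consistency and tightening give $F'_{\chi(t)}\res e = F'_e \subseteq F^0_e$. Hence $\ans(C^{\mathcal T})\subseteq \ans(C_0)$ and $\out' \le |\ans(C_0)\res X| = \out$. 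This is the same fact the paper uses in the proof of Lemma~\ref{lemma:guanaco-correct}.

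Your accounting $O(N^{w(1+\delta)}+\out)$ is correct, and it exposes a slip in the paper's own proof. The paper bounds the realization step by $M^{w^+}$ with $w^+=w(1+\delta)$ and then asserts exponent $\subw(H,X)+\delta$. For $w>1$ this does not follow, and the cost can really be incurred, since the algorithm may realize any realizable pair. For example, take $Q_\square$ with every variable taking $M^{a}$ values, every relation bi-regular, and $a=\frac12-\frac32\delta$. Then two adjacent edges form a realizable pair whose join has $M^{w(1+\delta)}$ tuples. Be explicit, however, that substituting $\delta/w$ proves the bound for the call $\guanaco_{H,X,\delta/w}$, not for the call $\guanaco_{H,X,\delta}$ named in the lemma. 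What you have shown for the literal call is $O(N^{\subw(H,X)(1+\delta)}+\out)$. The $+\delta$ form is a statement about the reparametrized algorithm.
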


\begin{proof}
First, let us consider the recursion tree that $\guanacorec$ gives rise to, and bound its depth.  Each time a recursive call is made, a new relation has been realized, by line~\ref{alg:gr-realize}, and the call to $\cleanup_{\epsilon}(M, C)$ results in a set of configurations with the same realized sets as $C$ (by Theorem~\ref{thm:cleanup}); thus,
the maximum depth is $2^{|V|}$.
The branching factor of the recursion tree is determined by the size of the output given by
the call to $\cleanup_{\epsilon}(M, C)$ in line~\ref{alg:gr-cleanup}; but this is a function of $|V|$, by Theorem~\ref{thm:cleanup}.
The size of the entire recursion tree is thus bounded above by a function of $|V|$.
The algorithm thus terminates.

To argue the claimed time bound, it suffices to argue that each individual execution of the routine $\guanacorec$ (that is, excluding the time incurred by recursive calls) runs within time
$O(N^{\subw(H) + \delta} + \out)$.
By Theorem~\ref{thm:cleanup},
line~\ref{alg:gr-cleanup} runs within linear time
$O(N)$, as do
line~\ref{alg:gr-check-realizable},
the check for a realizable pair, and
line~\ref{alg:gr-find-td},
the search for a tree decomposition.
Line~\ref{alg:gr-yannakakis} executes
a call to the Yannakakis algorithm;
as remarked in Note~\ref{note:yannakakis}
it is known that
this algorithm runs within time $O(N + \out')$,
where $\out'$ here denotes the output size of
$\ans(C') \res X$; but $\ans(C') \res X$ forms part of
the overall output, so we have $\out' \leq \out$.
The only remaining line to analyze is
line~\ref{alg:gr-realize};
but as explained in Note~\ref{note:realization},
realizing an $(M, w^+)$-realizable pair
can be performed within time $M^{w^+}$.
\end{proof}

Towards showing correctness of the algorithm, the following lemma is key; somewhat similar lemmas were given by Marx~\cite{DBLP:journals/jacm/Marx13} and by Berkholz and Schweikardt~\cite{berkholz_et_al:LIPIcs.MFCS.2019.58}.  Our lemma will allow us to derive that, whenever $\guanacorec$ cannot realize an additional relation on a configuration $C' \in \C'$, there is a submodular function $g$  where every set $S$ with $g(S) \leq w$ is realized in the configuration $C'$, and moreover has the properties needed to guarantee in line~\ref{alg:gr-find-td} that there is a
tree decomposition of the desired form.
Speaking \emph{approximately}, the function is defined by setting~$g(S)$ to a value close to $\log_M |F_S|$ for each realized set $S$, and to a value close to $w^+$ for each non-realized set $S$.

\begin{lemma}
\label{lemma:submodular}
Suppose that $\delta> 0$, $w\geq 1$, and $M \geq 1$ are values, $V$ is a set of variables, and
$\epsilon$, $w^+$ are defined as in the $\guanaco$ algorithm.
Suppose that a non-empty configuration $C = (\R, (F_S)_{S \in \R})$ with variable set $V$
is such that:
\begin{itemize}

\item it is $(M,\epsilon)$-clean,
\item it has no $(M,w^+)$-realizable pair,
\item every realized set $S$ has $|F_S| \leq M^{w^+}$.

\end{itemize}
Then there exists a monotone submodular function $g: \wp(V) \to \Reals$ such that:
\begin{enumerate}
\item $g(\emptyset) = 0$.
\item Each set $S$ of variables satisfying
$g(S) \leq w$ is realized, that is, is in $\R$.
\item Each set $S \in \R$ where $|F_S| \leq M$
satisfies $g(S) \leq 1$.
\end{enumerate}
\end{lemma}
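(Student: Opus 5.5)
The plan is to start from the ``log-size'' set function suggested in Section~\ref{sec:intuition}. That function is only approximately submodular, with an error of at most $\epsilon$. I will repair it by adding a small strictly submodular perturbation that depends only on $|S|$, and then rescale by a factor slightly below $1$ so that item~3 holds.

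\textbf{Step 1 (base function).} Define $h:\wp(V)\to\Reals$ by $h(S)=\log_M|F_S|$ if $S\in\R$, and $h(S)=w^+$ otherwise; this is well defined since $C$ is non-empty and $\pi$-consistent, so every $F_S$ is non-empty. We have $h(\emptyset)=0$. For monotonicity, take $S\subseteq T$. If both are realized, $\pi$-consistency gives $F_T\res S=F_S$, hence $|F_S|\le|F_T|$. If $T$ is not realized, $h(T)=w^+\ge h(S)$ by the size assumption. The case $S\notin\R$, $T\in\R$ cannot occur by down-closure.

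\textbf{Step 2 ($h$ is $\epsilon$-approximately submodular).} Take incomparable $S,T$; comparable pairs are trivially fine. I claim $h(S)+h(T)\ge h(S\cap T)+h(S\cup T)-\epsilon$ and split into cases.
\begin{itemize}
\item If $S$ (or $T$) is not realized, then $S\cup T$ is not realized either, by down-closure. The inequality then reduces to $h(T)\ge h(S\cap T)$, which is monotonicity.
\item If $S,T\in\R$ and $S\cup T\in\R$, then $|F_{S\cup T}|\le|F_S|\cdot\maxdeg_C(T|S\cap T)$. Uniformity bounds the right-hand side by $M^\epsilon|F_S||F_T|/|F_{S\cap T}|$, which gives the claim after taking $\log_M$.
\item If $S,T\in\R$ but $S\cup T\notin\R$, the absence of an $(M,w^+)$-realizable pair gives $|F_S|\cdot\maxdeg_C(T|S\cap T)>M^{w^+}$. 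Uniformity again gives $h(S)+h(T)-h(S\cap T)>w^+-\epsilon=h(S\cup T)-\epsilon$.
\end{itemize}

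\textbf{Step 3 (perturbation and scaling).} Let $n=|V|$ and $\varphi(k)=k-k^2/(2n)$, which is nondecreasing on $[0,n]$ and has $\varphi(0)=0$. For incomparable $S,T$, the identity $|S|+|T|=|S\cap T|+|S\cup T|$ gives
\[
\varphi(|S|)+\varphi(|T|)-\varphi(|S\cap T|)-\varphi(|S\cup T|)=\frac{(|S\cup T|-|S|)(|S\cup T|-|T|)}{n}\ge\frac1n .
\]
Set $g(S)=\lambda\bigl(h(S)+\eta\,\varphi(|S|)\bigr)$ with $\eta=n\epsilon$. Then the strict submodularity of $\eta\varphi$ absorbs the $\epsilon$ deficit from Step~2, so $g$ is monotone and submodular with $g(\emptyset)=0$.

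Since $\varphi\le n$, choosing $\lambda=1/(1+n^2\epsilon)$ gives item~3: if $|F_S|\le M$, then $g(S)\le\lambda(1+n^2\epsilon)=1$.

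For item~2, a non-realized $S$ has $g(S)\ge\lambda w^+=w(1+\delta)/(1+n^2\epsilon)$. This exceeds $w$ as long as $n^2\epsilon<\delta$. That follows from the choice of $\epsilon$: $\epsilon\le n^{-4}$ gives $n^2\epsilon\le\sqrt\epsilon$, and $\epsilon\le(1-\frac1{1+\delta})^4$ gives $\sqrt\epsilon\le(\delta/(1+\delta))^2<\delta$. The corner case $n=1$ is handled directly, since then every pair is comparable and no perturbation is needed. So any $S$ with $g(S)\le w$ must be realized.

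\textbf{Main obstacle.} The delicate point is Step~3, which must balance three things at once. The perturbation must be large enough to dominate the uniformity error $\epsilon$ in every incomparable pair; this needs margin $\eta/n\ge\epsilon$. Yet after rescaling it must be small enough that realized sets with $|F_S|\le M$ stay at most $1$, while non-realized sets still exceed $w$. The quadratic $\varphi$ and the exponents in the definition of $\epsilon$ are what make these constraints compatible. I would double-check the strict inequality and the corner case $n=1$ there.
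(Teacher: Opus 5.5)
Your proof is correct and follows essentially the same approach as the paper. You use the same base function ($\log_M|F_U|$ on realized sets, $w^+$ elsewhere), show it is submodular up to an additive $\epsilon$ via uniformity and the absence of realizable pairs, repair it with a concave quadratic in $|U|$ whose submodularity surplus is proportional to $|S\setminus T|\cdot|T\setminus S|$, and rescale. Only the constants differ: your global factor $\lambda=1/(1+n^2\epsilon)$ with the check $n^2\epsilon<\delta$, versus the paper's factor $\alpha=1-\epsilon^{1/3}$ on the base part plus the unscaled perturbation $2\epsilon^{2/3}|U|-\epsilon|U|^2$.
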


\begin{proof}
\rev{We show that the following function $g$ has the described properties.
 Set $\alpha = 1 - \epsilon^{1/3}$, and define the function $h: \wp(V) \to \Reals$ by
$h(U) = 2\epsilon^{2/3}|U| - \epsilon|U|^2$.
Define
\[
g^-(U) =
\begin{cases}
\alpha \log_M |F_U| & \text{if $U \in \R$, that is, $U$ is realized} \\
\alpha w^+ & \text{otherwise } \\
\end{cases}
\]
Define $g(U)$ as $g^-(U) + h(U)$.
In the particular case that $M=1$, each instance of
a $\log_M$ operation will only be applied to the value $1$, in this proof; we here use the convention that $\log_1 1 = 1$.
}

We first observe that $\epsilon^{1/3} |V| < 1$,
so
any set $U \subseteq V$ of variables
has $0 \leq \epsilon^{1/3} |U| < 1$;
we have
$h(U) = \epsilon^{1/3} [\epsilon^{1/3} |U| (2 - \epsilon^{1/3} |U|)]$, and can conclude
$0 \leq h(U) \leq \epsilon^{1/3}$.

We argue monotonicity of $g$ as follows.
The function $h$ is readily verified to be monotone.
It thus suffices to verify that $g^-$ is monotone.  But this follows
from the following: first, for any realized sets $S$, $T$ with $S \subseteq T$,
by $\pi$-consistency we have $\log_M |F_S| \leq \log_M |F_T|$ and thus $g^-(S) \leq g^-(T)$;
second, by assumption, for any realized set $T$, we have
$\log_M |F_T| \leq w^+$, so if a set $U$ is not realized, we have $g^-(T) \leq g^-(U)$.

We argue submodularity of $g$ as follows.
First, we note that it is elementarily verified, \rev{using the identity
$|S \cap T|^2 + |S \cup T|^2 - |S|^2 - |T|^2 = 2 \cdot  |S \setminus T| \cdot |T \setminus S|$,} that
\begin{equation}
h(S) + h(T) - h(S \cap T) - h(S \cup T)
= 2\epsilon \cdot |S \setminus T| \cdot |T \setminus S| \ge 0
\label{eq:h-submodularity}
\end{equation}
and so we have that $h$ is submodular.
We have to show that for any sets $S$, $T$, we have the inequality
$g(S) + g(T) \geq g(S \cap T) + g(S \cup T)$.
When $S \subseteq T$ or $T \subseteq S$, this inequality \rev{is immediate}; so suppose in the remainder
of the verification that $|S \setminus T| \geq 1$
and $|T \setminus S| \geq 1$.
By the submodularity of $h$,
it suffices to verify
$g^-(S) + g^-(T) \geq g^-(S \cap T) + g^-(S \cup T)$.
In the case that at least one of $S$, $T$ is not realized, assume for notation
that $T$ is not realized; we have $g^-(T) = g^-(S \cup T)$, so then the needed inequality
follows from the monotonicity of $g^-$.
The remaining case is where both $S$, $T$ are realized.
In this case, we have, with base $M$ logarithms,
\begin{align*}
g(S) + g(T)
& = \alpha \log |F_S| + \alpha \log |F_T| + h(S) + h(T) \\
& = \alpha \log |F_S| + \alpha \log |F_{S \cap T}| \avgdeg_C (T|S \cap T) + h(S) + h(T) \\
& \geq \alpha \log |F_S| + \alpha \log |F_{S \cap T}| + \alpha \log \frac{\maxdeg_C (T|S \cap T)}{M^{\epsilon}} + h(S) + h(T) \\
& =  \alpha \log |F_S| + \alpha \log |F_{S \cap T}| + \alpha \log \maxdeg_C (T|S \cap T) - \alpha \epsilon + h(S) + h(T) \\
& \geq \alpha \log |F_S| + \alpha \log |F_{S \cap T}| + \alpha \log \maxdeg_C (T|S \cap T) + h(S \cap T) + h(S \cup T) \\
& = g^-(S \cap T) + \alpha \log [|F_S|\maxdeg (T|S \cap T)] + h(S \cap T) + h(S \cup T).
\end{align*}
Here, the first inequality follows from $(M,\epsilon)$-uniformity
and the
second inequality follows from the above inequality~(\ref{eq:h-submodularity}) on $h$ along with the observation that $2 \epsilon - \alpha \epsilon \geq 0$.
In order to conclude the argument, it suffices to argue that
$\alpha \log [|F_S|\maxdeg (T|S \cap T)] \geq  g^-(S \cup T)$.
\begin{itemize}
\item
In the case that $S \cup T$ is realized, then
$|F_S|\maxdeg (T|S \cap T) \geq |F_{S \cup T}|$, and we are done.
\item
In the case that $S \cup T$ is not realized, we have that
$(S,T)$ is not a realizable pair, which implies that
$ \log [|F_S|\maxdeg (T|S \cap T)] > w^+$, and we are done.

\end{itemize}

\rev{We have argued that $g$ is monotone and submodular.
We verify the three claimed properties as follows.}

\rev{(1)} The configuration $C$ is assumed to be non-empty, so
$|F_\emptyset| = 1$, implying $g(\emptyset) = 0 + h(\emptyset) = 0$.

\rev{(2)} We show that
any set $S$ satisfying $g(S) \leq w$ is realized.
For any set $S$ satisfying this inequality,
we have $g(S) \leq w = w^+ / (1+\delta) \leq (1-\epsilon^{1/4})w^+ < (1-\epsilon^{1/3})w^+ \leq \alpha w^+ + h(S)$.  But then by definition of $g$,
we have that $g^-(S) < \alpha w^+$ and
thus $S$ is realized.

\rev{(3)}
\rev{Suppose that $S$ is realized, that is, suppose $S \in \R$, and that
$|F_S| \leq M$; we show $g(S) \leq 1$.}
Since $|F_S| \leq M$, we have
\rev{$g^-(S) \leq \alpha = 1 - \epsilon^{1/3}$.}
By our initial observation on $h$, we have
$h(S) \leq \epsilon^{1/3}$, and so
$g(S) \leq (1 - \epsilon^{1/3}) + \epsilon^{1/3} = 1$.
\end{proof}

\begin{lemma} (correctness)
\label{lemma:guanaco-correct}
On a configuration $C_0$ with hypergraph $H$,
a call to $\guanaco_{H,X,\delta}(C_0)$ returns
$\ans(C_0) \upharpoonright X$.
\end{lemma}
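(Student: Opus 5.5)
We would prove correctness by induction over the (finite, by Lemma~\ref{lemma:guanaco-termination}) recursion tree of $\guanacorec$. We establish the invariant that every call $\guanacorec_{H,X,M,\epsilon,w^+}(C)$ occurring in the execution is made on a configuration $C$ that tightens $C_0$, has $\ans(C) \subseteq \ans(C_0)$, and has every relation of size at most $M^{w^+}$. Under this invariant, the call returns exactly $\ans(C)\res X$. The trivial case $M=0$ is handled directly: some relation of $C_0$ is empty, so $\ans(C_0)=\emptyset$, and line~\ref{alg:guanaco-emptyset} is correct. Otherwise the invariant holds for the top call with $C=C_0$, since $w^+ \geq 1$ and $M$ is the maximum relation size.

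For the inductive step, Theorem~\ref{thm:cleanup} applies by the size bound. It gives that $\C'$ represents $C$, so $\ans(C)\res X = \bigcup_{C'\in\C'} \ans(C')\res X$. It also gives that each $C'$ is $(M,\epsilon)$-clean, tightens $C$, and shares its realized sets, so relation sizes remain at most $M^{w^+}$. If $C'$ has a realizable pair $(S,T)$, then $C''$ adds $F'_{S\cup T}$, which equals the join of $F'_S$ and $F'_T$. Hence $\ans(C'')=\ans(C')$, since every answer restricts on $S\cup T$ to such a join tuple. By Note~\ref{note:realization} the new relation has size at most $M^{w^+}$, and $C''$ still tightens $C_0$. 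The induction hypothesis then gives that the recursive call returns $\ans(C')\res X$.

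The main step is the else-branch. There we must show that a free-connex tree decomposition of $H,X$ with all bags realized exists, and that Yannakakis returns exactly $\ans(C')\res X$. Since $C'$ is clean, it is non-empty, so Lemma~\ref{lemma:submodular} applies with $w=\subw(H,X)$ and yields a monotone submodular $g$ with $g(\emptyset)=0$, which is a polymatroid. Every edge $e\in E(H)$ is realized in $C'$ because $C'$ tightens $C_0$, and $|F'_e|\le |F^0_e|\le M$, so property (3) gives $g(e)\le 1$. Hence $g\in S(H)$. By the definition of $\subw(H,X)$ there is a free-connex tree decomposition $(T,\chi)$ with $\max_t g(\chi(t)) \le w$. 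By property (2), every bag is realized, so the search at line~\ref{alg:gr-find-td} succeeds.

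Finally, we must check that Yannakakis outputs exactly $\ans(C')\res X$ rather than the possibly larger $\ans(C^-)\res X$. Every answer of $C^-$ respects each bag relation. Every edge of $H$ lies in some bag, and by $\pi$-consistency the restriction of a bag tuple to that edge lies in $F'_e\subseteq F^0_e$. Hence $\ans(C^-)\subseteq\ans(C_0)$ as full assignments. Combining this with $\ans(C')\subseteq\ans(C^-)$, the union over all leaves is sandwiched between $\ans(C_0)\res X$ and itself. We therefore strengthen the invariant: each call returns a set $A$ with $\ans(C)\res X\subseteq A\subseteq \ans(C_0)\res X$. Applied at the root, where $C=C_0$, this gives equality. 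This sandwich is the point most in need of care, together with verifying the hypotheses of Lemma~\ref{lemma:submodular}.
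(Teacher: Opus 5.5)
Your proof is correct and follows essentially the paper's argument: induction over the recursion tree with the invariant $(\ans(C)\res X)\subseteq A\subseteq(\ans(C_0)\res X)$. Lemma~\ref{lemma:submodular}, together with edge domination via property (3), guarantees the tree decomposition in the else-branch, and the fact that bags cover all edges gives the upper inclusion. You are in fact somewhat more careful than the paper in tracking the $M^{w^+}$ size bound and the $M=0$ case.

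One correction: your opening claim that each call returns exactly $\ans(C)\res X$ is false in general, because Yannakakis may output elements of $\ans(C^-)\res X$ outside $\ans(C')\res X$. So the sandwich invariant you reach at the end is the one that must be carried through the whole induction, including the recursive branch, where you currently still state exactness.
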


\begin{proof}
Consider the recursion tree of
$\guanacorec(C_0)$ when $\guanaco$ is run on a configuration $C_0$.
We claim that each call $\guanacorec(C)$
returns a set $A$ where
\[(\ans(C) \res X) \subseteq A \subseteq (\ans(C_0) \res X).\]
This suffices, since then the call $\guanacorec(C_0)$
returns the set $A = \ans(C_0) \res X$.

First, observe that each configuration resulting from a call to $\cleanup$ is a tightening of $C$, by
Theorem~\ref{thm:cleanup}, and also observe that realizing a pair also tightens a configuration.  Thus, every configuration
encountered in a call $\guanacorec(C)$
is a tightening of $C_0$.

We prove the claim by induction on the structure of the recursion tree.
Consider a single execution of $\guanacorec$.
Let us denote $m = |\C'|$;
 with respect to the $i$th iteration of the for loop, we use
  $C'_i$ to denote the configuration from $\C'$
 considered and $A_i$ to denote the set of
 answers added to $A$.
\begin{itemize}
\item
Consider the base case, where the execution of $\guanacorec(C)$
makes no recursive calls.
Each time line~\ref{alg:gr-find-td} is encountered,
the configuration $C'_i$ satisfies the assumptions of
Lemma~\ref{lemma:submodular}.
\rev{Thus by Lemma~\ref{lemma:submodular}, there exists a monotone submodular function $g$ satisfying properties (1), (2), and (3) from that lemma's statement.
By (1), we obtain that $g$ is a polymatroid, and by (3), we have that $g$ satisfies edge domination.  Thus $g$ is in the set $S(H)$ defined in Section~\ref{subsec:submodular-width},
and since $w = \subw(H,X)$, by the definition of submodular width
there exists a free-connex tree decomposition $(T,\chi)$ of $H,X$
where $\max_{t \in V(T)} g(\chi(t)) \leq w$.
It follows from (2) that each set $\chi(t)$ (over $t \in V(T)$)
is realized with respect to $C'_i$.
We obtain that $(T,\chi)$ is a free-connex tree decomposition of $H,X$
where all bags are realized with respect to $C'_i$, and so
line~\ref{alg:gr-find-td} of $\guanacorec$ successfully finds a tree decomposition.}

Thus Yannakakis is run on each configuration $C'_i$, resulting in sets
$A_i$ where $(\ans(C'_i) \res X) \subseteq A_i$
(recall Note~\ref{note:yannakakis}).
We then have $\bigcup_i (\ans(C'_i) \res X) \subseteq \bigcup_i A_i$, and since $\C'$ represents~$C$, we have
$(\ans(C) \res X) \subseteq \bigcup_i A_i$.
Moreover each $A_i$ is a subset of $\ans(C_0) \res X$
since~$A_i$ is obtained by using a free-connex tree decomposition of $H, X$ which covers all edges of $H$.
Thus $A = \bigcup_i A_i$ satisfies the stated condition.

 \item
Consider the inductive case, where the execution of $\guanacorec(C)$
makes recursive calls.
In the case that a recursive call is made after examining $C'_i$,
 let
$C''_i$
denote the configuration from line~\ref{alg:gr-realize}.
In this case,  we have
$(\ans(C''_i) \res X) \subseteq A_i \subseteq (\ans(C_0) \res X)$ by induction;
we also have $\ans(C''_i) = \ans(C'_i)$.  In the case that no recursive call is made after examining~$C'_i$, by reasoning as in the base case, we have
$(\ans(C'_i) \res X) \subseteq A_i \subseteq (\ans(C_0) \res X)$.
Thus, for all $i$ we have
\begin{center}
$(\ans(C'_i) \res X) \subseteq A_i \subseteq (\ans(C_0) \res X)$,
\end{center}
so as $\C$ represents~$C$,
we obtain
\begin{center}
$(\ans(C) \res X) = \bigcup_i (\ans(C'_i) \res X) \subseteq \bigcup_i A_i \subseteq (\ans(C_0) \res X)$.
\end{center}
Thus $A = \bigcup_i A_i$ satisfies the stated condition.
\end{itemize}
\end{proof}

Collecting together Lemmas~\ref{lemma:guanaco-termination} and~\ref{lemma:guanaco-correct}, we obtain our main result.

\begin{theorem}
Fix a hypergraph $H$ with vertex set $V$, fix
$X \subseteq V$, and fix $\delta > 0$.
On any input configuration $C_0$ whose hypergraph is $H$,
the call $\guanaco_{H,X,\delta}(C_0)$
returns $\ans(C_0) \res X$
within time
$O(N^{\subw(H,X) + \delta} + \out)$.
\end{theorem}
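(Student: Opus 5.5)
The main theorem follows by combining Lemma~\ref{lemma:guanaco-termination} and Lemma~\ref{lemma:guanaco-correct}. The plan is to check that their hypotheses match the theorem's setting and that the pieces fit together, in particular the degenerate case and the relation between $M$ and $N$.

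First, dispose of the case $M = 0$. Here some relation is empty; in fact all relations have size $0$, since $M$ is the maximum relation size. Then $\ans(C_0) = \emptyset$, so returning $\emptyset$ at line~\ref{alg:guanaco-emptyset} is correct. This takes constant time beyond reading the input, which is $O(N)$.

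Otherwise $M \geq 1$. Since $M$ is a relation size of $C_0$, we have $M \leq N$. Also $w = \subw(H,X) \geq 1$ whenever $H$ has a nonempty edge: the polymatroid $g(U)=\min(1,|U|)$ is edge-dominated, and every tree decomposition has a bag containing an edge. Hence $w^+ \geq 1$, so the preconditions of $\guanacorec$ and of Theorem~\ref{thm:cleanup} hold. The edge case of a hypergraph with no nonempty edges is trivial and can be handled separately.

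Second, invoke Lemma~\ref{lemma:guanaco-correct} to get that the returned set equals $\ans(C_0) \res X$. Invoke Lemma~\ref{lemma:guanaco-termination} for termination and the time bound $O(N^{\subw(H,X)+\delta} + \out)$. In checking that lemma's bound, the one point needing care is that realizing a pair costs $M^{w^+} = M^{w(1+\delta)}$, which is at most $N^{w + w\delta}$ rather than $N^{w+\delta}$.

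The main obstacle is therefore reconciling the exponent $w(1+\delta)$ with the claimed $w + \delta$. The fix I would try is to apply the lemmas with $\delta' := \delta / w$ in place of $\delta$, which is legitimate since $w$ is a constant depending only on $H, X$. Then $w^+ = w + \delta$, and both lemmas go through verbatim, since they hold for every positive $\delta$. Alternatively, one can note that the algorithm's internal $\delta$ is a free parameter and the theorem is an existence statement for each fixed $\delta$. The final step is to confirm that all hidden constants, namely the recursion-tree size from Theorem~\ref{thm:cleanup} and Lemma~\ref{lemma:guanaco-termination}, depend only on $|V|$, $\epsilon$, and $w^+$, hence are constants in data complexity. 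This yields the stated bound.
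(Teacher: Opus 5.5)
Your plan matches the paper's proof, which just cites Lemma~\ref{lemma:guanaco-termination} for the running time and Lemma~\ref{lemma:guanaco-correct} for correctness. Your side checks are correct and fill in details the paper leaves implicit. These are the case $M=0$, the bound $M\le N$, and $\subw(H,X)\ge 1$ via the edge-dominated polymatroid $U\mapsto\min(1,|U|)$, which gives $w^+\ge 1$ so that Theorem~\ref{thm:cleanup} and Lemma~\ref{lemma:submodular} apply.

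The exponent discrepancy you flag is real, and the paper's proof inherits it. The proof of Lemma~\ref{lemma:guanaco-termination} only justifies cost $M^{w^+}=M^{w(1+\delta)}$ for a realization step. The same applies to the $\cleanup$, realizable-pair and Yannakakis steps in recursive calls, which are linear in configurations whose relations can have size up to $M^{w^+}$, not $N$.

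This is not merely a weakness of the analysis. Take the Boolean 4-cycle of Section~\ref{sec:intuition} ($w=3/2$, $\delta<1/3$), with $F_{\{X,Y\}}=A\times B$, $F_{\{Y,Z\}}=B\times A$, $F_{\{Z,W\}}=A\times B$, $F_{\{W,X\}}=B\times A$, where $|A|\approx M^{w^+-1}$ and $|B|\approx M^{2-w^+}$ (rounded so the pair below stays realizable).
\begin{itemize}
\item After down-closing, this configuration is already $(M,\epsilon)$-clean, since products are $\pi$-consistent and perfectly uniform.
\item The pair $(\{X,Y\},\{Y,Z\})$ is $(M,w^+)$-realizable, with $|F_{\{X,Y\}}|\cdot\maxdeg(\{Y,Z\}|\{Y\})\approx M^{w^+}$.
\item Realizing it, which $\guanacorec$ is free to do, writes down $|A|^2|B|=\Theta(M^{3/2+3\delta/2})$ tuples. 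Meanwhile $N=\Theta(M)$ and $\out\le 1$, so this exceeds $O(N^{3/2+\delta})$.
\end{itemize}

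So your substitution $\delta/w$ is the right repair, but be explicit about what it proves. The lemmas do not ``go through verbatim'': the termination lemma's stated bound is exactly what you showed its proof does not deliver. Also, the theorem as written is about a specific call, not an existence statement. State the corrected conclusion directly:
\begin{itemize}
\item The call $\guanaco_{H,X,\delta/w}(C_0)$ runs in $O(N^{\subw(H,X)+\delta}+\out)$. Equivalently, this is Algorithm~\ref{algo:guanaco} with $w^+:=w+\delta$ and $\epsilon$ computed from $\delta/w$.
\item The call $\guanaco_{H,X,\delta}(C_0)$ named in the statement is only guaranteed $O(N^{\subw(H,X)(1+\delta)}+\out)$.
\end{itemize}
Since $\delta>0$ is arbitrary, this reparametrization costs nothing for the headline result.
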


\begin{proof}
The call runs within the stated time due to Lemma~\ref{lemma:guanaco-termination},
and returns $\ans(C_0) \res X$ due to
Lemma~\ref{lemma:guanaco-correct}.
\end{proof}

\begin{note}
In the Boolean case, that is, the case where $X = \emptyset$,
the algorithms need only return \emph{yes} or \emph{no}
depending on whether or not there exists an answer to
the input configuration $C_0$, or, equivalently, whether or not
the empty assignment is in $\ans(C_0) \res X$ or not
(respectively).

In this case, the algorithms can be simplified as follows.
In $\guanaco$, in line~\ref{alg:guanaco-emptyset}, a \emph{no} can simply be returned.
In $\guanacorec$, adding results to $A$ via taking a union is tantamount to simply taking the boolean OR of the decisions returned by recursive calls and by the Yannakakis algorithm:
the only thing that would possibly be added to $A$ would be the empty assignment.
  So,
in line~\ref{alg:guanacorec-recurse}, if the recursive call returns \emph{yes}, then \emph{yes} can be returned immediately;
line~\ref{alg:gr-return-at-end} can be replaced with returning \emph{no};
and
lines~\ref{alg:gr-find-td} and~\ref{alg:gr-yannakakis}
can in fact be replaced with returning \emph{yes}: it was shown
in the proof of Lemma~\ref{lemma:guanaco-correct}
that a tree decomposition of the stated form (that is, a tree decomposition of~$H$ where all bags are realized) always exists, and in this case, as $C'$ is $\pi$-consistent and nonempty, by the known dynamic programming algorithm on tree decompositions, there is an answer to $C'$ restricted to relations defined on bags of $H$, which implies that there is an answer to the original configuration~$C_0$.
\end{note}

\rev{
\section{Conclusion}
\label{sec:conclusion}
Every conjunctive query can be evaluated in polynomial time with a certain exponent.
To date, the smallest known value of this exponent that can be achieved using combinatorial algorithms is the submodular width of the query.
However, algorithms achieving this exponent are often complex and rely on information theory, LP duality, and sub-probability measures---among other tools~\cite{DBLP:conf/pods/Khamis0S17,theoretics:13722,panda-express,10.1145/3788852.3805595}.
In contrast to these algorithms, we presented in this paper a significantly simpler algorithm, called $\guanaco$, that can evaluate any conjunctive query in polynomial time with an exponent equal to the submodular width plus an arbitrarily small $\delta$.
The algorithm and its analysis are natural and do not rely on any of the aforementioned tools.

We conclude with some open problems.}

\paragraph*{\rev{Removing the extra $\delta>0$ in the exponent.}}
\rev{
    It is not clear whether the extra $\delta$ in the exponent can be removed.
    This delta also seems to manifest in the \emph{Jaguar} algorithm~\cite{jaguar},
    and larger slacks in the exponent also appear in prior algorithms~\cite{DBLP:journals/jacm/Marx13,berkholz_et_al:LIPIcs.MFCS.2019.58}.
    This non-tightness in the exponent seems to be a common weakness of \emph{primal algorithms}, by which we mean algorithms that work directly in the space of polymatroids.
    The only known algorithms that bypass this weakness are \emph{dual algorithms}, by which we mean algorithms that operate in the space of Shannon inequalities---such as \emph{PANDA} and \emph{PANDAExpress}~\cite{DBLP:conf/pods/Khamis0S17, theoretics:13722, panda-express}.}
\paragraph*{\rev{Handling arbitrary degree constraints.}}
\rev{
    While we view $\guanaco$ as being significantly simpler than prior algorithms for conjunctive query evaluation such as \emph{PANDA}~\cite{DBLP:conf/pods/Khamis0S17, theoretics:13722, panda-express} and \emph{Jaguar}~\cite{jaguar}, the downside is that we have only shown it to achieve a complexity in terms of the input database size~$N$; unlike \emph{PANDA} and \emph{Jaguar}, it has not been shown to handle arbitrary {\em degree constraints} on the input relations~\cite{DBLP:conf/pods/Khamis0S17}.}


\begin{acks}
H. Chen acknowledges the support of EPSRC grants EP/V039318/1 and EP/X03190X/1.
\end{acks}

\bibliographystyle{ACM-Reference-Format}
\bibliography{bib,hubiebib}

\end{document}